

\documentclass{article}
\pdfpagewidth=8.5in
\pdfpageheight=11in

\usepackage{ijcai25}

\usepackage{times}
\usepackage{soul}
\usepackage[hyphens]{url}
\usepackage[hidelinks]{hyperref}
\usepackage[utf8]{inputenc}
\usepackage[small]{caption}
\usepackage{graphicx}
\usepackage{amsmath}
\usepackage{amsthm}
\usepackage{booktabs}
\usepackage{algorithm}
\usepackage{algorithmic}
\let\oldciteyear\citeyear
\renewcommand{\citeyear}[1]{[\oldciteyear{#1}]}


\urlstyle{same}



\newtheorem{theorem}{Theorem}





\pdfinfo{
/TemplateVersion (IJCAI.2025.0)
}

\title{Responsibility Gap in Collective Decision Making}

\author{
Pavel Naumov$^1$
\and
Jia Tao$^2$
\affiliations
$^1$University of Southampton, United Kingdom\\
$^2$Lafayette College, United States\\
\emails
p.naumov@soton.ac.uk,
taoj@lafayette.edu
}

\usepackage{subcaption}
\newtheorem{definition}{Definition}
\usepackage{amssymb}
\newtheorem{lemma}{Lemma}
\newtheorem{claim}{Claim}
\newenvironment{proof-of-claim}{\begin{trivlist}\item\noindent{\em Proof of Claim.}}{\hfill {\small $\boxtimes$}\end{trivlist}}
\newcommand{\citet}[1]{\citeauthor{#1}~\citeyear{#1}}
\usepackage{xcolor}

\begin{document}

\maketitle

\begin{abstract}
The responsibility gap is a set of outcomes of a collective decision-making mechanism in which no single agent is individually responsible. In general, when designing a decision-making process, it is desirable to minimise the gap. 

The paper proposes a concept of an elected dictatorship. It shows that, in a perfect information setting, the gap is empty if and only if the mechanism is an elected dictatorship. It also proves that in an imperfect information setting, the class of gap-free mechanisms is positioned strictly between two variations of the class of elected dictatorships.
\end{abstract}


\section{Introduction}

AI agents are involved in making significant decisions in our everyday lives -- from driving autonomous vehicles and investing in stock to estimating (in the role of war robots) potential civilian casualties. For such decisions to be socially acceptable, there should be at least one agent responsible for the outcome of the decision. That is, the decision-making mechanism should be designed without responsibility gaps.

The term {\em responsibility gap} (or responsibility void) is used in the literature in two distinct but related ways. First, it refers to situations where an agent who would normally be held responsible lacks {\em moral agency}--for example, minors, animals, and often AI systems~\cite{m04eit,ct15pt,bhlmmp20ai,c20see,g20eit,sm21pt,t21pt,k22eit,o23pt,hv23synthese}. Second, it describes cases where the design of a collective decision-making mechanism is such that no {\em single} agent (artificial or otherwise) can be held accountable for the outcome of the group decision~\cite{bh11pq,d18pss,l21pt,d22,dy23as,sn25jpl}. 

In this paper, we investigate the properties of responsibility gaps--specifically in the second sense--using the concept of an elected dictatorship that we introduce. We demonstrate that in the perfect information setting, the responsibility gap is empty if and only if the mechanism constitutes an elected dictatorship. In the case of imperfect information, we further show that the class of gap-free mechanisms lies strictly between two variants of elected dictatorships. 
After the statement of Theorem~\ref{big result perfect info}, we compare our findings with two closely related results from the literature.


\section{Decision-making Mechanisms}

An example of a collective decision-making mechanism is the Two-person Rule used to launch American Minuteman II intercontinental ballistic missiles with nuclear warheads. Only the President of the United States can authorise the launch of the missiles. Once the President issues a launch order, the crew on the missile launch site must strap to their chairs (in case of a nuclear attack on the launch facility). Then, two on-duty officers must {\em simultaneously} turn their keys to activate the launch. No single officer can turn both keys because they are 12 feet apart~\cite{usaf24}. 

\begin{figure}[ht]
\begin{center}
\scalebox{0.49}{\includegraphics{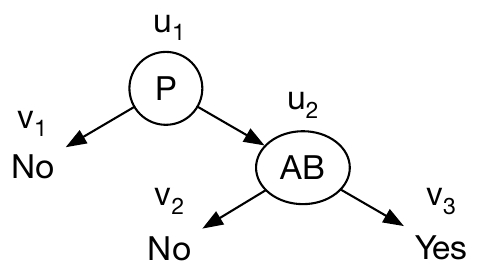}}
\caption{Two-person Rule mechanism.}\label{minuteman figure}
\end{center}
\end{figure}

Figure~\ref{minuteman figure} depicts this mechanism as a tree. 
In this figure, the President ($P$) can unilaterally decide not to launch the missiles and, thus, transition the decision-making process from node $u_1$ to node $v_1$. Alternatively, the President can transition the mechanism to node $u_2$. In that node, officers $A$ and $B$ must simultaneously turn the keys in order for the decision path to end in node $v_3$ where the missiles are automatically launched\footnote{Technically, for missiles to leave the silos, another pair of keys must be turned at another launch control facility~\cite{usaf24}. However, this does not change the responsibility analysis in this paper.}. If either of them does not turn the key, the decision-making process transitions from node $u_2$ to node $v_2$ and the missiles are not launched. 


The next definition generalises the mechanism depicted in Figure~\ref{minuteman figure}.

\begin{definition}\label{mechanism}
A tuple $(V,E,\mathcal{A},\Delta,\tau,\ell)$ is a decision-making mechanism, where
\begin{enumerate}
    \item $(V,E)$ is a rooted directed tree; by $L$ and $D$ we denote the set of all leaf and decision (non-leaf) nodes of this tree, respectively. For each decision node $v\in D$, by $Ch_v$ we denote the set $\{u\in V\mid (v,u)\in E\}$ of children of node $v$,
    \item $\mathcal{A}$ is a set of ``agents'',
    \item $\Delta_v^a$ is a nonempty set of ``actions'' available to an agent $a\in\mathcal{A}$ at a decision node $v\in D$,
    \item $\tau_v:\prod_{a\in\mathcal{A}}\Delta_v^a\to Ch_v$ is a choice function for each decision node $v\in D$,
    \item $\ell:L\to \{\text{Yes},\text{No}\}$ is a labelling function that maps leaf nodes into  ``outcomes''. 
\end{enumerate}
\end{definition}
Although trees are usually assumed to be finite, the results of this paper also hold for a rooted directed tree $(V,E)$ of infinite width (but not infinite depth).

In the case of the mechanism in Figure~\ref{minuteman figure}, nodes $u_1$ and $u_2$ are the decision nodes and nodes $v_1$, $v_2$, and $v_3$ are the leaf nodes. Relation $E$ is represented by the directed edges in the figure. In this example, the set of agents consists of the President $P$ and officers $A$ and $B$. Each of the agents has two actions: ``go left'' (Left) and ``go right'' (Right). 

The choice function $\tau_v$ determines which node the decision process transitions to based on the actions of agents taken at node $v$. 
Intuitively, we assume that only the President decides at node $u_1$ in Figure~\ref{minuteman figure} and only the two officers contribute to the decision at node $u_2$. This can be captured in the more general setting of Definition~\ref{mechanism} by assuming that all three agents act at each decision node, but some actions might not influence the decision at all. For example, function $\tau_{u_1}$ formally takes a tuple (representing actions of agents $P$, $A$, and $B$). However, the value of this function is completely determined by the action of agent $P$ alone. Similarly, the value of $\tau_{u_2}$ is completely determined by the actions of agents $A$ and $B$. 

Labelling function $\ell$ specifies the outcome of the decision-making process at each of the leaf nodes. Note that in this paper, we only consider the mechanisms that make binary ({\em Yes/No}) decisions. {We briefly discuss a more general class of mechanisms in the conclusion}.

Note that each element $\delta$ of the set $\Pi_{a\in\mathcal{A}}\Delta^a_v$ specifies a possible combination of actions of all agents at a decision node $v$. We refer to such a combination $\delta$ as an {\em action profile} at node $v$. By $\delta_a$ we denote the action of agent $a\in\mathcal{A}$ under the profile $\delta$.
Intuitively, by $Next^a_d(v)$ we denote the set of all children of node $v$ to which the decision-making process can transition from node $v$ if agent $a$ chooses action $d$.

\begin{definition}\label{next}
$Next^a_d(v)=\{\tau_v(\delta)\mid \delta_a=d\}$.    
\end{definition}

In the case of our running example, 
$Next^A_{\text{\em Left}}(u_2)=\{v_2\}$ and
$Next^A_{\text{\em Right}}(u_2)=\{v_2,v_3\}$.

\section{Counterfactual Responsibility}

Imagine now a situation, when the President decides to authorise a nuclear strike, the two officers turn the keys, and half of the world is destroyed. Who is responsible for this? The notion of responsibility has been extensively studied in philosophy and law. In philosophy, one of the most commonly discussed approaches~\cite{w17} is to define responsibility based on Frankfurt's principle of alternative possibilities:
``{\em 
\dots\ a person is morally responsible for what he has done only if he could have done otherwise}''
\cite{f69tjop}. In recent works in AI, ``could have done otherwise'' has been interpreted as having a {\em strategy}, at some point during the decision process, to prevent the outcome~\cite{ydjal19aamas,nt19aaai,nt20aaai,bfm21ijcai,sn25jpl}. In this paper, we use the term {\em counterfactual responsibility} to refer to the definition of responsibility based on Frankfurt’s principle. We often omit ``counterfactual'' because this is the only type of responsibility that we consider in this work. 

In our example, at the leaf node $v_3$, all three agents are counterfactually responsible for the decision to launch the missiles because each of them has had a strategy to prevent it. The President could have chosen not to authorise a nuclear strike by transitioning the decision-making process from node $u_1$ to node $v_1$. Each officer could have chosen not to turn the key, unilaterally transitioning the process from node $u_2$ into node~$v_2$.

Because counterfactual responsibility is defined through having a strategy, before formally defining responsibility in an arbitrary decision-making mechanism, we need to define what we mean by ``having a strategy'' to prevent an outcome. Since our decision mechanisms have only two outcomes, {\em Yes} and {\em No}, preventing one of them is equivalent to achieving the other. Below, we use backward induction to formally define the set $win_a(o)$ of all nodes from which an agent $a$ has a strategy to achieve an outcome $o$.

\begin{definition}\label{win definition}
For any outcome $o\in  \{\text{Yes},\text{No}\}$ and any agent $a\in\mathcal{A}$, let set 
$win_a(o)$ be the smallest subset of $V$ such that
\begin{enumerate}
    \item $\ell^{-1}(o)\subseteq win_a(o)$,
    \item if $Next^a_d(v)\subseteq win_a(o)$, then $v\in win_a(o)$, for each decision node $v\in D$ and each action $d\in\Delta^a_v$.
\end{enumerate}
\end{definition}
For example, for the decision mechanism depicted in Figure~\ref{minuteman figure}, we have $win_P(\text{\em No})=\{u_1,v_1,v_2\}$, 
$win_P(\text{\em Yes})=\{v_3\}$,
$win_A(\text{\em No})=\{u_1,v_1,u_2,v_2\}$, 
$win_A(\text{\em Yes})=\{v_3\}$.

For any outcome $o\in\{\text{\em Yes},\text{\em No}\}$, by $\overline{o}$ we mean the other outcome. For example, $\overline{\text{\em Yes}}=\text{\em No}$.

\begin{definition}
A {\bf\em  decision path} is any sequence of nodes $v_1,\dots,v_k$ such that $k\ge 1$ and $v_{i+1}\in Ch_{v_i}$ for each $i<k$.    
\end{definition}

The next definition formally captures Frankfurt’s principle of 
alternative possibilities.

\begin{definition}\label{responsible}
An agent $a\in\mathcal{A}$ is {\bf\em responsible} at a leaf node $v\in L$ if there is a decision node $u\in D$ on the decision path from the root to leaf $v$ 
such that  $u\in win_a(\overline{\ell(v)})$.
\end{definition}


Let us now consider a situation when the United States is attacked by an enemy using nuclear weapons, but the President fails to authorise a retaliatory strike. The decision-making process terminates in node $v_1$, see Figure~\ref{minuteman figure}. Note that the President does not have a strategy to guarantee the strike because the President cannot guarantee that the two keys will be simultaneously turned by officers $A$ and $B$. Thus, the President is not {\em counterfactually} responsible in such a situation\footnote{In outcome $v_1$ the President is responsible for ``seeing-to-it'' that retaliatory strike does not take place. The ``seeing-to-it'' form of responsibility (see \cite{sn25jpl} for an overview) is different from the counterfactual responsibility that we consider in this paper.}. Of course, in this situation, the two officers are not counterfactually responsible either. Formally, by Definition~\ref{win definition},
\begin{equation*}
u_1,v_1\notin 
win_{P}(\textit{Yes})\cup win_{A}(\textit{Yes}) \cup
win_{B}(\textit{Yes}).
\end{equation*}
Hence, in such a situation, none of the agents is responsible for the lack of a retaliatory strike. In other words, there is a {\em responsibility gap}.

The responsibility gap also exists in node $v_2$, where, after the President authorises the launch, at least one of the officers decides not to turn the key. This is because even after the President's authorisation, neither officer has a strategy to guarantee the launch. After all, the other officer can always decide not to turn the key. Formally,
\begin{equation*}
u_1,u_2,v_2\notin 
win_{P}(\textit{Yes})\cup win_{A}(\textit{Yes}) \cup
win_{B}(\textit{Yes}).    
\end{equation*}

\begin{definition}\label{gap-free}
A mechanism is {\bf\em gap-free} if, for each leaf node, there is at least one agent responsible at this node. 
\end{definition}
The mechanism depicted in Figure~\ref{minuteman figure} is {\em not} gap-free because no agent is responsible at leaf nodes $v_1$ and $v_2$.

\section{Elected Dictatorship}

Next, let us turn our attention to a completely different setting inspired by Article I of the US Constitution: {\em ``The Vice President of the United States shall be President of the Senate, but shall have no Vote, unless they be equally divided''}. To make the example more manageable, let us suppose that the Senate contains just two senators: a Republican ($R$) and a Democrat ($D$). If their votes (by a paper ballot) agree, the decision is made. Otherwise, the Vice President breaks the tie, see Figure~\ref{vp figure}.  


\begin{figure}[ht]
    \centering
    \begin{subfigure}[b]{0.22\textwidth}
        \begin{center}
        \scalebox{0.49}{\includegraphics{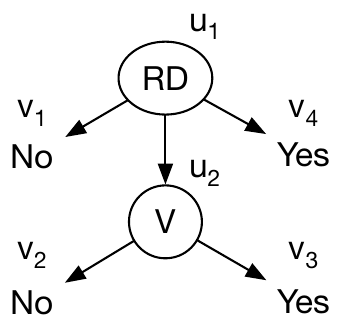}}
        \caption{US Senate mechanism.}
       \label{vp figure}
        \end{center}
    \end{subfigure}
    \begin{subfigure}[b]{0.22\textwidth}
        \begin{center}
        \scalebox{0.49}{\includegraphics{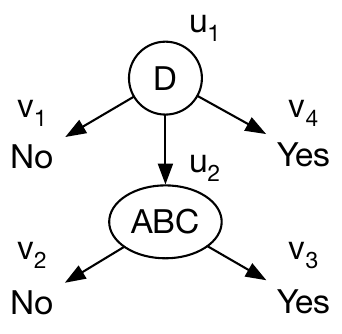}}
        \caption{Academic mechanism.}
        \label{dean figure}
        \end{center}
    \end{subfigure}
    \caption{(a) There is a responsibility gap at nodes $v_1$ and $v_4$. The Vice President is a dictator at node~$u_2$. (b) The mechanism is gap-free. The Dean is a dictator at root node $u_1$.}
\end{figure}

In this setting, the responsibility gap exists in node $v_1$ (outcome ``{\em No}'') and node $v_4$ (outcome ``{\em Yes}''). This is because if the process terminates in either of these nodes, then none of the three agents, at any point during the decision-making process, has a unilateral strategy to prevent the outcome:
\begin{align*}
&u_1,v_1\notin 
win_{R}(\textit{Yes})\cup win_{D}(\textit{Yes})\cup
win_{V}(\textit{Yes}),\\
&u_1,v_4\notin 
win_{R}(\textit{No})\cup win_{D}(\textit{No})\cup
win_{V}(\textit{No}).
\end{align*}

However, the Vice President is counterfactually responsible for the outcome in nodes $v_2$ and $v_3$ because, if the decision-making process reaches either node, then the Vice President has had a chance to prevent the outcome corresponding to the node. In fact, at node $u_2$, the Vice President simultaneously had a strategy to guarantee either of the two possible outcomes:
$u_2\in win_V(\textit{Yes})$ and $u_2\in win_V(\textit{No})$.
We say that the Vice President is a {\em dictator} at node $u_2$. 
\begin{definition}\label{dictator}
An agent $a\in\mathcal{A}$ is a {\bf\em dictator} at a decision node $v\in D$ if $v\in win_a(\text{Yes})$ and $v\in win_a(\text{No})$.
\end{definition}
There is no dictator at any of the nodes in the Two-person Rule mechanism depicted in Figure~\ref{minuteman figure}.

In this paper, we investigate the connection between responsibility gaps and the presence of dictators in a decision mechanism. The example depicted in Figure~\ref{vp figure} shows that the existence of a single dictator is not enough to guarantee that the mechanism is gap-free. However, as we are about to see, the existence of a single dictator condition can be strengthened to provide such a guarantee.

Towards this goal, let us consider one more example. It seems to be a common pattern in academia that administrators prefer to avoid making unpopular decisions by delegating the decision-making to a committee. We capture this situation in the mechanism depicted in Figure~\ref{dean figure}. Here, the Dean ($D$) can either decide on {\em Yes/No} or delegate the decision to a three-member committee consisting of academic staff members $A$, $B$, and $C$. The committee makes the decision by a majority vote using a paper ballot. 

In such a setting, each of the three committee members is never {\em individually} responsible for the outcome because none of them at any moment has an individual strategy that guarantees any of the two outcomes. At the same time, the Dean is not only responsible for the decision in all four leaf nodes, but the Dean is also a dictator at the root node $u_1$.
In particular, this means that there is a dictator at a node on each root-to-leaf decision path of the mechanism. 

\begin{definition}\label{elected dictatorship}
A mechanism is an {\bf\em elected dictatorship} if there is a dictator at one of the decision nodes of each root-to-leaf decision path. 
\end{definition}

The mechanism depicted in Figure~\ref{dean figure} is an extreme example of an elected dictatorship where there is a single dictator at the root node. More generally, different agents might be dictators along different root-to-leaf decision paths of an elected dictatorship. 

It turns out that being an elected dictatorship is not just a sufficient condition for being gap-free, but these two conditions are equivalent. See the theorem below. 

\begin{theorem}\label{big result perfect info}
A mechanism is a gap-free mechanism iff it is an elected dictatorship. 
\end{theorem}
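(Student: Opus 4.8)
The plan is to prove both directions, with the reverse implication (gap-free $\Rightarrow$ elected dictatorship) being the substantial one. Throughout I work with, for each node $v$, the set $F_v=\{o : v\in win_a(o)\text{ for some }a\in\mathcal{A}\}$ of outcomes that some single agent can unilaterally force from $v$. The backbone of the argument is a structural lemma: for distinct agents $a\neq b$ one never has $v\in win_a(\text{Yes})\cap win_b(\text{No})$. I would prove this by induction on the height of $v$. At a leaf it is immediate, since $win_a(o)=\{v:\ell(v)=o\}$; at a decision node, if $a$ forces Yes via some $d_a$ and $b\neq a$ forces No via some $d_b$, then because $a$ and $b$ are distinct coordinates I may choose a profile $\delta$ with $\delta_a=d_a$ and $\delta_b=d_b$, and the child $\tau_v(\delta)$ then lies in $win_a(\text{Yes})\cap win_b(\text{No})$, contradicting the induction hypothesis. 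An immediate corollary is a clean reformulation of Definition~\ref{dictator}: there is a dictator at $v$ iff $F_v=\{\text{Yes},\text{No}\}$, because if both outcomes are forceable the two forcers must coincide. Note also that a leaf $v$ is covered by a responsible agent iff $\overline{\ell(v)}\in F_u$ for some decision node $u$ on its root path; thus gap-free means every leaf is covered, and elected dictatorship means every root-to-leaf path meets a node $u$ with $F_u=\{\text{Yes},\text{No}\}$. In this language the forward direction is one line: a dictator node on the path to $v$ has $\overline{\ell(v)}\in F_u$, so some agent is responsible at $v$.

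For the converse I would induct on the height of the tree and split on $F_r$ at the root $r$. If there is a dictator at $r$, then since every root-to-leaf path passes through $r$ we are done. If $F_r=\emptyset$, no leaf can be covered using $r$, so each child subtree is gap-free on its own; the induction hypothesis makes each an elected dictatorship, and placing the root on top preserves this. The entire difficulty is concentrated in the remaining case, where exactly one outcome---say Yes---is forceable at $r$ while No is not. I expect this to be the main obstacle, because there the No-leaves of a child subtree may be covered \emph{from above} by the root's ability to force Yes, so the subtrees need not be gap-free and the induction cannot be applied verbatim.

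To dispatch this case I would prove a one-sided lemma by induction on height: if $\text{No}\notin F_r$ but $\text{Yes}\in F_r$, then the subtree rooted at $r$ contains a Yes-leaf $p$ such that No is not forceable at any node on the path $r\to p$ (an internally uncovered Yes-leaf). This is where the structural lemma pays off. Fix $a^*$ with $Next^{a^*}_{d^*}(r)\subseteq win_{a^*}(\text{Yes})$; every child $c$ in this nonempty set already satisfies $\text{Yes}\in F_c$. If some such $c$ also has $\text{No}\notin F_c$, I descend into it by the induction hypothesis and prepend $r$. Otherwise every such $c$ has $\text{No}\in F_c$, and then the structural lemma forces the No-witness at each $c$ to be $a^*$ itself, so $c\in win_{a^*}(\text{No})$ for all of them; hence $Next^{a^*}_{d^*}(r)\subseteq win_{a^*}(\text{No})$, giving $\text{No}\in F_r$, a contradiction. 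Applied at the root, the lemma yields a Yes-leaf with no No-forcer anywhere above it, i.e.\ an uncovered leaf, contradicting gap-freeness; the mirror lemma handles $F_r=\{\text{No}\}$. Thus the single-outcome cases cannot occur under gap-freeness, leaving only the dictator and $F_r=\emptyset$ cases, which closes the induction. The only points needing care are that the whole development uses induction on height (so it survives the infinite-width setting) and that the nonemptiness of $Next^{a^*}_{d^*}(r)$ and the ``some $c$'' selection are cardinality-free.
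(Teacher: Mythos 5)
Your proof is correct, but it takes a genuinely different route from the paper's. The paper never proves Theorem~\ref{big result perfect info} directly in the perfect-information setting: in Appendix~\ref{back to perfect section} it views a mechanism as an imperfect-information mechanism with trivial indistinguishability relations, observes that then $ewin_a(o)=win_a(o)$, and squeezes the theorem between Theorem~\ref{big result} (elected epistemic dictatorship $\Rightarrow$ epistemic-gap-free) and Theorem~\ref{big result two} (epistemic-gap-free $\Rightarrow$ elected semi-epistemic dictatorship), whose hard core, Lemma~\ref{step-up-lemma}, is deployed via a minimal-index argument along a fixed root-to-leaf path. You instead argue entirely within the perfect-information setting by induction on tree height with a case split on the set $F_r$ of outcomes forceable at the root: the dictator case ($F_r=\{\text{Yes},\text{No}\}$, where your corollary that the two forcers must coincide is needed and is right), the $F_r=\varnothing$ case (recursing into child subtrees, which is legitimate because $win$ is computed by backward induction locally to each subtree), and the singleton case, which your one-sided descent lemma shows yields an internally uncovered leaf, contradicting gap-freeness. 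The two arguments share their combinatorial heart: your structural lemma is exactly the paper's Lemma~\ref{two hares}, with the same mixed-profile proof, and your descent lemma is a perfect-information analogue of the paper's Lemma~\ref{step-down-lemma}/Lemma~\ref{step-up-lemma} pair --- both extend a path along an $a^*$-winning strategy while avoiding $win_{a^*}(\overline{o})$ and use two-hares disjointness to force any $\overline{o}$-forcer to be $a^*$ itself. What the paper's detour buys is the full imperfect-information picture (Theorems~\ref{big result} and~\ref{big result two} plus the strictness witnesses $M$ and $N$), from which Theorem~\ref{big result perfect info} falls out as a corollary; what yours buys is a self-contained, $ewin$-free and more elementary proof, including the clean reformulation that a dictator exists at $v$ iff $F_v=\{\text{Yes},\text{No}\}$ --- a reformulation that has no epistemic analogue, since the paper's mechanism $M$ shows the corresponding equivalence fails once $ewin$ replaces $win$, so your argument does not generalise, which is precisely the trade-off. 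Your flagged caveats (induction on height, not width; nonemptiness of $Next^{a^*}_{d^*}(r)$ from nonempty action sets) are handled correctly and match the paper's own finite-depth assumption; the only micro-point worth one sentence in a full writeup is that in the $F_r=\varnothing$ case a child of $r$ cannot itself be a leaf, as such a leaf would be uncovered outright, so the recursion genuinely applies to subtrees of smaller height.
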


As we show in Appendix~\ref{back to perfect section}, Theorem~\ref{big result perfect info} follows from more general results about games with imperfect information that we establish later in this paper.
This theorem gives a complete characterisation of the ``gap-freeness'' for the class of decision-making mechanisms specified in Definition~\ref{mechanism}. There are two previous related works that considered this property for a much more narrow class of ``discursive dilemma'' mechanisms. In terms of Definition~\ref{mechanism}, a discursive dilemma mechanism is a {\em single-node} mechanism with function $\tau$ having a very special ``criteria-based'' form~\cite{l06ethics}. \citet{dp22scw} considered an alternative best-effort-based definition of responsibility and gave a complete characterisation of gap-free mechanisms in discursive dilemma mechanisms. Their characterisation does not refer to dictatorship. \citet{bh18ej} considered probabilistic discursive dilemma mechanisms and another variation of the definition of responsibility. They proved that if a gap-free discursive dilemma mechanism does not have what they call ``fragmentation'', then the mechanism must be a dictatorship.

\section{Mechanisms with Imperfect Information}

In many decision-making mechanisms, some agents might not have complete information about the actions already taken. An example of such a mechanism is the Drawing Straws\footnote{In 2017, this mechanism was used to decide who gets a seat in the Northumberland County Council (England) after votes were evenly divided~\cite{e17guardian}.}.
Figure~\ref{straw figure} shows a version of this mechanism in which Alice (A) holds both a short and a long straw between her fingers, without revealing which straw is which. Bob (B) picks either left (action 0) or right (action 1) straw. The outcome of the decision-making is determined by whether he has chosen a short or a long straw. The dashed line labelled with $B$ in the figure represents the fact that Bob cannot distinguish nodes $u_2$ and $u_3$ at the moment he chooses an action at those nodes.

\begin{figure}[ht]
\begin{center}
\scalebox{0.49}{\includegraphics{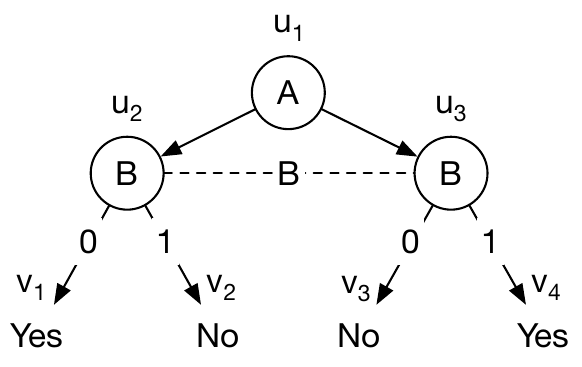}}
\caption{Drawing Straws mechanism.}\label{straw figure}
\end{center}
\end{figure}

To handle the decision-making processes like the one in Figure~\ref{straw figure}, we need a more general notion of a decision-making mechanism in which the agents cannot distinguish some of the decision nodes. 

\begin{definition}\label{imperfect info mechanism}
A decision-making mechanism with imperfect information is a tuple $(V,E,\mathcal{A},\Delta,\tau,\ell,\sim)$ where
\begin{enumerate}
    \item $(V,E,\mathcal{A},\Delta,\tau,\ell)$ is a decision-making mechanism,
    \item $\sim_a$ is an {\bf\em indistinguishability} equivalence relation on the set $D$ of decision nodes for each agent $a\in\mathcal{A}$ such that if $u\sim_a v$, then $\Delta^a_u=\Delta^a_v$.
\end{enumerate}
\end{definition}
Note that the ``if $u\sim_a v$, then $\Delta^a_u=\Delta^a_v$'' requirement of item~2 above specifies that each agent has the same available actions in all indistinguishable nodes. In other words, each agent {\em knows} the actions available to her at the current node.

The set of decision-making mechanisms as specified in Definition~\ref{mechanism} consists of quintuples, while the set of mechanisms with imperfect
information in Definition~\ref{imperfect info mechanism} consists of sextuples. Using the terminology of {\em object-oriented programming} languages, we can say that the class of mechanisms with imperfect information is an {\em extension} of the class of mechanisms from Definition~\ref{mechanism}. Indeed, to treat a mechanism with imperfect information as a mechanism, we just need to ignore the equivalence relations. From this point of view, Definition~\ref{next} through Definition~\ref{elected dictatorship} are still applicable to the mechanisms with imperfect information. Using the object-oriented terminology, one can say that the class of mechanisms with imperfect information {\em inherits} the notions specified in these definitions from the generic class of mechanisms. We adopt such a viewpoint in this paper.

\section{Epistemic Responsibility}

Let us go back to the Drawing Straws mechanism depicted in Figure~\ref{straw figure}. Suppose that  Alice positions long and short straws in such a way that the mechanism transitions from node $u_1$ to node $u_2$. Does Bob have a strategy at node $u_2$ to guarantee the outcome {\em No}? We would say that he does (it is action 1). In fact, it is easy to see that $u_2\in win_B(\textit{No})$ by Definition~\ref{win definition}. Thus, by Definition~\ref{responsible}, Bob is counterfactually responsible at leaf node $v_1$. This observation, however, is not intuitively acceptable: how can Bob be blamed for pulling, say, a long straw if he did not know which of the two straws is long and which is short? This is the reason why in the literature it has been suggested that in order for an agent to be counterfactually responsible in an imperfect information setting, the agent should not only have a strategy to prevent the outcome but also should know what this strategy is~\cite{ydjal19aamas,nt20ai}.

To define what ``to know the strategy'' formally means in our setting is a non-trivial task. In the literature, uniform or ``know-how''~\cite{fhlw17ijcai,nt18ai} strategies are usually defined as functions that assign the same actions to all indistinguishable nodes. Following this approach, we can adjust Definition~\ref{win definition} for the imperfect information setting as shown below. By $[v]_a$ we mean the equivalence class of node $v$ with respect to the equivalence relation $\sim_a$.
\begin{definition}\label{uwin}
For any outcome $o\in  \{\text{Yes},\text{No}\}$ and any agent $a\in\mathcal{A}$, let set 
$uwin_a(o)$ be the smallest subset of $V$ such that, for each node $v\in V$,
\begin{enumerate}
    \item $\ell^{-1}(o)\subseteq uwin_a(o)$,
    \item if $Next^a_d([v]_a)\subseteq uwin_a(o)$, then $v\in uwin_a(o)$, for each decision node $v\in D$ and each action $d\in\Delta^a_v$.
\end{enumerate}
\end{definition}

To define the notion of counterfactual responsibility in decision-making mechanisms with imperfect information, one can consider replacing the set $win_a(\overline{\ell(v)})$ with the set $uwin_a(\overline{\ell(v)})$ in Definition~\ref{responsible}.
Unfortunately, this does {\em not} capture our intuition of what is responsibility in an imperfect information setting. Indeed, consider the decision mechanism with imperfect information depicted in Figure~\ref{uwin figure}.

\begin{figure}[ht]
\begin{center}
\scalebox{0.49}{\includegraphics{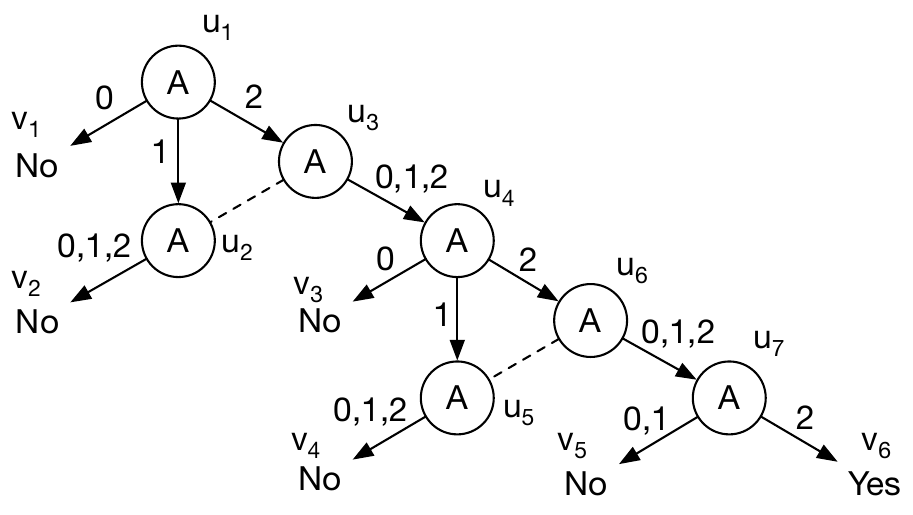}}
\caption{A single-agent decision-making mechanism with imperfect information. Dashed lines represent the relation $\sim_A$.}\label{uwin figure}
\end{center}
\end{figure}
This mechanism has a single agent $A$ (we don't need additional agents to explain the issue). In each decision node, this agent has the same set of actions $\{0,1,2\}$. The outcomes of these actions are shown in the figure. 

Note that $v_6\in uwin_A(\textit{Yes})$ by item~1 of Definition~\ref{uwin}. Also, $u_7\in uwin_A(\textit{Yes})$ by item~2 of Definition~\ref{uwin} (observe that $[u_7]_A=\{u_7\}$ and take $d=2$). At the same time, $u_6\notin uwin_A(\textit{Yes})$. Indeed, $[u_6]_A=\{u_5,u_6\}$ and there is no action $d$ such that $Next^a_d([u_6]_a)\subseteq uwin_a(\textit{Yes})$. As a result, again by Definition~\ref{uwin},
\begin{equation}\label{6-jan-a}
u_4\notin uwin_A(\textit{Yes}).   
\end{equation}
Next, consider a case when the decision process terminates at leaf node $v_3$. The outcome of this process is {\em No}. Could agent $A$ be held counterfactually responsible for this? Intuitively, yes! To reach node $v_3$, the decision path must go through node $u_4$. Agent $A$ can distinguish node $u_4$ from all other nodes in the mechanism, see Figure~\ref{uwin figure}. Thus, while the process is at node $u_4$, agent $A$ knows that the process is at this node. Hence, while at node $u_4$, agent $A$ knows that if she chooses action 2, then she will become ``confused'' (mechanism will transition to node $u_6$ where she will not know how to achieve outcome {\em Yes}). However, in node $u_4$, agent $A$ knows that no matter what she does while being ``confused'' (in node $u_6$), the process will come to a node (in our case $u_7$) where she will wake up from the confusion and will know how to achieve the outcome {\em Yes}.  Having all this information in node $u_4$, agent $A$, we believe, ``knows'' how to achieve outcome {\em Yes} -- take a deep breath and put herself in the state of confusion by choosing action 2. Thus, we think, she should be held counterfactually responsible for the outcome {\em No} in leaf node $v_3$. To achieve this, we need to modify Definition~\ref{uwin} in such a way that statement~\eqref{6-jan-a} is no longer true. 

Of course, one might argue that the issue that we described in the previous paragraph only exists because we allow decision-making mechanisms in which agents can get ``confused''. If we assume that the agents have perfect recall, then the situation depicted in Figure~\ref{uwin figure} will never happen. Specifically, if agent $A$ remembers what action she took in node $u_4$, then she will always be able to distinguish node $u_5$ from node $u_6$. Thus, at node $u_6$ she would still know how to achieve the outcome {\em Yes}.

We agree that Definition~\ref{uwin} {\em seems} to work for agents with perfect recalls and that most commonly used decision-making mechanisms do not force agents to forget what they know. However, there are examples of decision making mechanisms that do so. For instance, a new employee selection mechanism might ask the selection committee to ignore information not shown in the application materials. A judge in court might instruct the jury to ignore certain evidence or a witness testimony. To state our results in the most general form that covers such mechanisms, in this paper we modify Definition~\ref{uwin}. Our revised definition will be able to handle multiple ``confusions/memory losses'' during the decision-making process. For example, consider leaf node $v_1$ in the same mechanism depicted in Figure~\ref{uwin figure}. We believe agent $A$ is counterfactually responsible for the outcome {\em No} at this node. Indeed, at node $u_1$ agent $A$ knows how to achieve outcome {\em Yes} -- by putting herself into the state of confusion twice: first, by taking action 2 at node $u_1$ and then by taking the same action 2 at node $u_4$. Our replacement for Definition~\ref{uwin} is Definition~\ref{ewin} stated below. 

\begin{definition}\label{ewin}
For any outcome $o\in  \{\text{Yes},\text{No}\}$ and any agent $a\in\mathcal{A}$, let set 
$ewin_a(o)$ be the smallest subset of $V$ such that, for each node $v\in V$,
\begin{enumerate}
    \item $\ell^{-1}(o)\subseteq ewin_a(o)$,
    \item if $Next^a_d([v]_a)\subseteq ewin_a(o)$, then $v\in ewin_a(o)$, for each decision node $v\in D$ and each action $d\in\Delta^a_v$,
    \item if $\bigcup_{d\in\Delta^a_v}Next^a_d(v)\subseteq ewin_a(o)$, then $v\in ewin_a(o)$,  for each decision node $v\in D$.
\end{enumerate}
\end{definition}
Note that Definition~\ref{ewin} adds one extra recursive case (item~3) to Definition~\ref{uwin}. This item states that the agent does not need to know how to act in the current node if any possible action in this node leads to the set $ewin_a(o)$. Because Definition~\ref{ewin} adds an {\em extra} case, $uwin_a(o)\subseteq ewin_a(o)$ for each agent $a\in\mathcal{A}$ and each outcome $o\in  \{\textit{Yes},\textit{No}\}$. 

In Figure~\ref{uwin figure}, for example, $v_6\in ewin_A(\textit{Yes})$ by item~1 of Definition~\ref{ewin}. Then, $u_6\in ewin_A(\textit{Yes})$ by item~3 of Definition~\ref{ewin}. Hence, $u_4\in ewin_A(\textit{Yes})$ by item~2 of Definition~\ref{ewin} with $d=2$.






As we discussed after Definition~\ref{imperfect info mechanism}, each mechanism with imperfect information can be viewed as a ``mechanism'' under Definition~\ref{mechanism}. Thus, for the mechanisms with imperfect information, in addition to set $ewin_a(o)$, one can also consider the set $win_a(o)$ as specified in Definition~\ref{win definition}. For example, 
$ewin_B(\textit{Yes})=\{v_1,v_4\}$ and
$win_B(\textit{Yes})=\{v_1,v_4,u_2,u_3\}$ 
 for the Drawing Straw mechanism in Figure~\ref{straw figure}.

The next lemma connects these two sets for an arbitrary mechanism with imperfect information. 

\begin{lemma}\label{ewin sub win}
$ewin_a(o)\subseteq win_a(o)$, for each agent $a\in\mathcal{A}$ and
each outcome $o\in  \{\text{Yes},\text{No}\}$.    
{\em [proof in Appx. A]}
\end{lemma}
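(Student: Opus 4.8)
The plan is to exploit the fact that $ewin_a(o)$ is, by Definition~\ref{ewin}, the \emph{smallest} subset of $V$ closed under its three recursive rules. Hence, to establish $ewin_a(o)\subseteq win_a(o)$ it suffices to show that the set $win_a(o)$ is itself closed under those same three rules. By the minimality of $ewin_a(o)$, the desired inclusion then follows at once, with no induction to grind through.

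Before checking the three conditions, I would record the set-valued reading of the $Next$ operator, namely $Next^a_d([v]_a)=\bigcup_{w\sim_a v}Next^a_d(w)$, together with the elementary consequence $Next^a_d(v)\subseteq Next^a_d([v]_a)$, which holds because $v\in[v]_a$. This single containment, plus the nonemptiness of $\Delta^a_v$ guaranteed by Definition~\ref{mechanism}, is what powers the whole argument.

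I would then verify the three closure conditions for $win_a(o)$ in turn. Condition~1 ($\ell^{-1}(o)\subseteq win_a(o)$) is literally item~1 of Definition~\ref{win definition}. For condition~2, assume $Next^a_d([v]_a)\subseteq win_a(o)$; the containment above gives $Next^a_d(v)\subseteq win_a(o)$, so item~2 of Definition~\ref{win definition} yields $v\in win_a(o)$. For condition~3, assume $\bigcup_{d\in\Delta^a_v}Next^a_d(v)\subseteq win_a(o)$; choosing any $d\in\Delta^a_v$ (possible since $\Delta^a_v\ne\varnothing$) gives $Next^a_d(v)\subseteq win_a(o)$, and item~2 of Definition~\ref{win definition} again produces $v\in win_a(o)$. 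Having checked all three rules, $win_a(o)$ is one of the sets over which the smallest is taken in Definition~\ref{ewin}, whence $ewin_a(o)\subseteq win_a(o)$.

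I do not expect a genuine obstacle: the entire content is the reduction of the two stronger rules of $ewin$ (the indistinguishability rule and the all-actions rule) to the single action rule of $win$. The only point that needs a moment's care is making sure each of those reductions is justified—the first by $Next^a_d(v)\subseteq Next^a_d([v]_a)$ and the second by the nonemptiness of $\Delta^a_v$—so that I never appeal to an action profile that does not exist.
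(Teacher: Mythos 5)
Your proof is correct, but it takes a genuinely different route from the paper's. The paper proves the inclusion element-wise, by induction on the height of a node in the mechanism tree: the leaf case uses item~1 of Definition~\ref{win definition}, and the decision-node case invokes an auxiliary lemma (Lemma~\ref{next-all-lemma}) stating that every decision node $v\in ewin_a(o)$ admits an action $d$ with $Next^a_d(v)\subseteq ewin_a(o)$; that lemma is precisely where your two reductions appear, packaged as an inversion on the two recursive clauses of Definition~\ref{ewin} (Case~1 uses $Next^a_d(v)\subseteq Next^a_d([v]_a)$, Case~2 uses the nonemptiness of $\Delta^a_v$ from item~3 of Definition~\ref{mechanism}). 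You instead bypass both the induction and the inversion by appealing directly to the minimality clause of Definition~\ref{ewin}: you check that $win_a(o)$ is closed under all three $ewin$-rules, and the inclusion $ewin_a(o)\subseteq win_a(o)$ follows at once. Your version is more economical and slightly more general---it never uses well-foundedness of node height, which is what the paper's induction relies on (and which is tied to the paper's finite-depth caveat)---while the paper's decomposition pays for itself through reuse: Lemma~\ref{next-all-lemma} is needed again in the proof of Lemma~\ref{step-down-lemma}, so the paper extracts it once and applies it twice. You were also right to make explicit the reading $Next^a_d([v]_a)=\bigcup_{w\sim_a v}Next^a_d(w)$ together with $v\in[v]_a$; note that the condition ``if $u\sim_a v$ then $\Delta^a_u=\Delta^a_v$'' in Definition~\ref{imperfect info mechanism} is what guarantees $d$ is available at every node of the class, so this union is well defined.
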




We are now ready to define what it means to be counterfactually responsible in decision-making mechanisms with imperfect information. Our definition below simply replaces the set $win_a(\overline{\ell(v)})$ with the set $ewin_a(\overline{\ell(v)})$ in Definition~\ref{responsible}. Since we consider mechanisms with imperfect information to be a ``subclass'' (in the sense of object-oriented programming) of mechanisms, the notion ``responsible'', as specified in  Definition~\ref{responsible} is still technically defined for the mechanisms with imperfect information. To avoid confusion, we use the term ``epistemically responsible'' in the definition below. However, it is important to remember that ``epistemically responsible'' is {\em the} proper definition of being counterfactually responsible in an imperfect information setting. It is the notion that properly captures our intuition about responsibility.

\begin{definition}\label{epistemically responsible}
An agent $a\in\mathcal{A}$ is {\bf\em epistemically responsible} at a leaf node $v\in L$ if there is a decision node $u\in D$ on the decision path from the root to the leaf node $v$ such that $u\in ewin_a(\overline{\ell(v)})$.
\end{definition}
For example, agent $A$ is epistemically responsible at all leaf nodes of the mechanism depicted in Figure~\ref{uwin figure}.

Intuitively, by an ``epistemic gap'' we mean the set of all leaf nodes in which no agent is epistemically responsible.

\begin{definition}\label{epistemic gap-free}
A mechanism is {\bf\em epistemic-gap-free} if, for each leaf node, there is at least one agent epistemically responsible at this node.    
\end{definition}

\section{Elected Epistemic Dictatorship}

The notions of ``dictator at a node'' and ``elected dictatorship'', as specified in Definition~\ref{dictator} and Definition~\ref{elected dictatorship}, have their epistemic counterparts based on function $ewin$ instead of $win$.


\begin{definition}\label{epistemic dictator}
An agent $a\in\mathcal{A}$ is an {\bf\em epistemic dictator} at a node $v$ if $v\in ewin_a(\text{Yes})$ and $v\in ewin_a(\text{No})$.
\end{definition}

\begin{definition}\label{elected epistemic dictatorship}
A mechanism is an {\bf\em elected epistemic dictatorship} if, for each root-to-leaf decision path, there is an epistemic dictator at a node on this path.
\end{definition}

The next theorem shows that the epistemic version of the right-to-left part of Theorem~\ref{big result perfect info} is true.  

\begin{theorem}\label{big result}
Any elected epistemic dictatorship is epistemic-gap-free. 
\end{theorem}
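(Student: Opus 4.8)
The plan is to unwind the definitions directly: the node that witnesses the elected-epistemic-dictatorship condition along a given root-to-leaf path will turn out to be exactly the decision node needed to witness epistemic responsibility at that leaf. Concretely, I would fix an arbitrary leaf node $v\in L$ and consider the unique path from the root to $v$ in the tree $(V,E)$. By Definition~\ref{elected epistemic dictatorship}, there is some agent $a\in\mathcal{A}$ who is an epistemic dictator at some node $u$ on this path, which by Definition~\ref{epistemic dictator} means that $u\in ewin_a(\text{Yes})$ and $u\in ewin_a(\text{No})$ simultaneously.

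The crux is the simple observation that, whatever outcome $\ell(v)$ the leaf carries, its opposite $\overline{\ell(v)}$ is still one of the two outcomes, so $u\in ewin_a(\overline{\ell(v)})$ holds in either case; this is precisely where membership in both win-sets at once is used. Before invoking Definition~\ref{epistemically responsible}, I would verify that $u$ is in fact a decision node. This follows because a leaf node $w$ can enter $ewin_a(o)$ only via item~1 of Definition~\ref{ewin} (items~2 and~3 add only decision nodes to the set), so $w$ lies in $ewin_a(o)$ only for $o=\ell(w)$ and never for both outcomes; hence no leaf can be an epistemic dictator, and therefore $u\in D$.

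With $u$ now known to be a decision node on the root-to-$v$ path satisfying $u\in ewin_a(\overline{\ell(v)})$, Definition~\ref{epistemically responsible} immediately yields that agent $a$ is epistemically responsible at $v$. Since $v$ was arbitrary, every leaf has a responsible agent, which is exactly the content of Definition~\ref{epistemic gap-free}. I do not expect a genuine obstacle in this direction, as the whole argument is a direct application of the definitions; the only point requiring a moment of care is the verification that the epistemic dictator occupies a decision node rather than a leaf, since it is this fact that licenses the application of Definition~\ref{epistemically responsible}.
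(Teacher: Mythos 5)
Your proposal is correct and takes essentially the same route as the paper's own proof: locate the epistemic dictator node $u$ on the root-to-leaf path, observe that membership in both $ewin_a(\text{\em Yes})$ and $ewin_a(\text{\em No})$ gives $u\in ewin_a(\overline{\ell(v)})$ regardless of the leaf's label, and conclude via Definition~\ref{epistemically responsible}. Your extra verification that $u$ must be a decision node (since a leaf can lie in $ewin_a(o)$ only for $o=\ell(v)$, so no leaf is an epistemic dictator) makes explicit a small step the paper's proof leaves implicit, but it is not a different argument.
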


\begin{proof}
Consider any root-to-leaf path $v_1,\dots,v_n$. By Definition~\ref{epistemic gap-free}, it suffices to show that some agent is epistemically responsible at leaf node $v_n$. By the assumption of the theorem that the mechanism is an elected epistemic dictatorship and Definition~\ref{elected epistemic dictatorship}, there is $i<n$ and an agent $a\in\mathcal{A}$ who is an epistemic dictator at decision node $v_i$. Hence, $v_i\in ewin_a(\text{\em Yes})$ and  $v_i\in ewin_a(\text{\em No})$ by Definition~\ref{epistemic dictator}. Thus,
$v_i\in ewin_a(\overline{\ell(v_n)})$. Therefore, agent $a$ is epistemically responsible at leaf node $v_n$ by Definition~\ref{epistemically responsible}.
\end{proof}


Perhaps surprisingly, the epistemic version of the other direction of Theorem~\ref{big result perfect info} is false. To prove this, consider the decision-making mechanism with imperfect information $M$ depicted in Figure~\ref{dic-nogap figure}.

\begin{figure}[ht]
\begin{center}
\scalebox{0.49}{\includegraphics{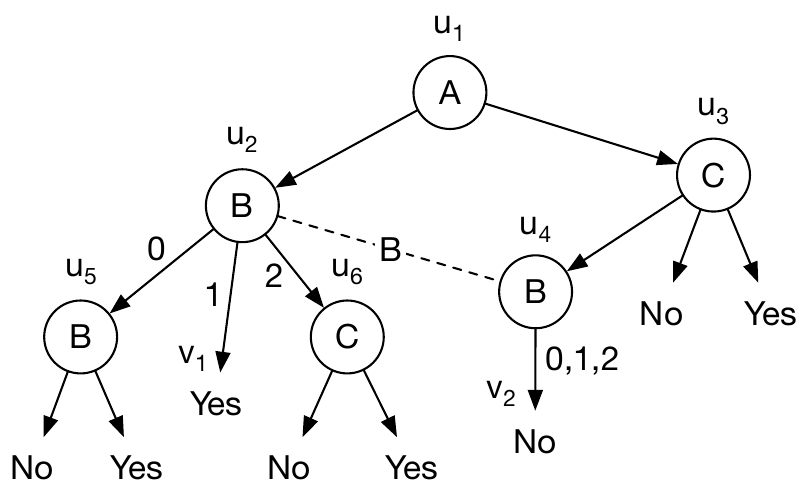}}
\caption{Mechanism $M$ with imperfect information. The names of actions are only shown for the nodes in which the acting agent does not have complete information about the current node.}\label{dic-nogap figure}
\end{center}
\end{figure}

\begin{lemma}\label{M1 lemma}
Mechanism $M$ with imperfect information is epistemic-gap-free.  
{\em [proof in Appx. A]}
\end{lemma}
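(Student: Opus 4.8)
The plan is to verify Definition~\ref{epistemic gap-free} directly: for every leaf node $v$ of $M$, I will exhibit an agent $a$ and a decision node $u$ on the root-to-$v$ path with $u\in ewin_a(\overline{\ell(v)})$, so that $a$ is epistemically responsible at $v$ by Definition~\ref{epistemically responsible}. First I would read off the leaf labels from Figure~\ref{dic-nogap figure} and record, for each leaf, the complementary outcome $\overline{\ell(v)}$ that some agent must be able to force.

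The core computation is to build the sets $ewin_a(\text{Yes})$ and $ewin_a(\text{No})$ for each agent $a$ via the bottom-up fixed-point construction of Definition~\ref{ewin}. Starting from the leaves (item~1), I would repeatedly add any decision node $v$ satisfying either item~2 (some action $d$ drives every $\sim_a$-indistinguishable copy of $v$ into the set already assembled) or item~3 (every action available at $v$ leads into the set, so the agent is ``safe'' even while confused), iterating until no new nodes are added. With these sets in hand, the verification is routine: walking up each root-to-leaf path, I would point to the first ancestor decision node that already lies in $ewin_a(\overline{\ell(v)})$ for a suitable agent $a$. The delicate part, and the reason $M$ is built this way, is that coverage of the leaves should rest essentially on item~3 of Definition~\ref{ewin}: the agent need not know how to act at the current node, being guaranteed to ``wake up'' later and force the desired outcome, exactly as in the discussion surrounding Figure~\ref{uwin figure}.

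The main obstacle I expect is the $ewin$ computation itself under imperfect information. Item~2 quantifies over whole equivalence classes $[v]_a$, so whether a node enters the set depends on the status of its indistinguishable siblings, and the interplay with item~3 must be tracked carefully through the fixed-point iteration. I would also confirm that the coverage is genuinely epistemic rather than an artifact of the larger set $win$: since $ewin_a(o)\subseteq win_a(o)$ by Lemma~\ref{ewin sub win}, I must ensure each responsible node I select actually lies in $ewin_a(\overline{\ell(v)})$ and not merely in $win_a(\overline{\ell(v)})$.

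Finally, I would organise the argument leaf-by-leaf rather than seeking a single uniform responsible agent. The mechanism $M$ is designed so that no node carries an epistemic dictator (that asymmetry is what the companion result exploits), and this is precisely why responsibility for distinct leaves will typically be discharged by \emph{different} agents. A case split over the leaves, each handled by naming the appropriate agent and the appropriate ancestor in its $ewin$ set, should therefore complete the proof.
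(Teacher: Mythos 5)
Your overall plan coincides with the paper's proof: both arguments verify Definition~\ref{epistemic gap-free} leaf by leaf, establishing the needed $ewin$ memberships directly from the fixed-point clauses of Definition~\ref{ewin} and naming a different responsible agent for different leaves (the paper spells out only the two nontrivial cases -- agent $C$ for leaf $v_2$ via $u_3\in ewin_C(\textit{Yes})$, and agent $B$ for leaf $v_1$ via $u_2\in ewin_B(\textit{No})$ -- and declares the remaining six leaves straightforward). However, one of your guiding expectations is wrong and would misdirect the computation: coverage of the leaves of $M$ does \emph{not} rest on item~3 of Definition~\ref{ewin}. The delicate membership is $u_2\in ewin_B(\textit{No})$, and it is obtained from item~2, using the uniform action $0$ across the nontrivial equivalence class $[u_2]_B=\{u_2,u_4\}$, since $Next^B_0(\{u_2,u_4\})=\{u_5,v_2\}\subseteq ewin_B(\textit{No})$. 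Item~3 is the clause needed for ``confused'' agents as in Figure~\ref{uwin figure}; mechanism $M$ is explicitly built to avoid such memory loss, and the paper stresses that the proofs of Lemma~\ref{M1 lemma} and Lemma~\ref{M2 lemma} remain valid when $ewin$ is replaced by $uwin$, which has no item~3 at all. A second slip: it is not true that ``no node carries an epistemic dictator'' in $M$ -- agent $C$ is an epistemic dictator at node $u_3$. What fails (and what Lemma~\ref{M2 lemma} actually shows) is the path-wise condition of Definition~\ref{elected epistemic dictatorship}: on the particular root-to-leaf path $u_1,u_2,v_1$ no node hosts an epistemic dictator. Neither error breaks your method, since the fixed-point iteration applies all clauses regardless of expectations, but as stated your narrative attributes the construction of $M$ to the wrong feature of Definition~\ref{ewin}.
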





Note that agent $B$ is a dictator at node $u_2$ because this agent can use action 0 to transition the decision-making process from node $u_2$ to node $u_5\in win_B(\textit{No})$ and this agent can use action 1 to transition the decision-making process from node $u_2$ to node $v_1\in win_B(\textit{Yes})$. However, as we show in the proof of the next lemma, agent $B$ is not an {\em epistemic} dictator at node $u_2$. 

\begin{lemma}\label{M2 lemma}
Mechanism $M$ with imperfect information is not an epistemic elected dictatorship. 
{\em [proof in Appx. A]}
\end{lemma}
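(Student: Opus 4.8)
The plan is to exploit the fact that ``$M$ is an elected epistemic dictatorship'' (Definition~\ref{elected epistemic dictatorship}) is a \emph{universal} statement quantifying over all root-to-leaf decision paths. To refute it, I would exhibit a single witness path $\pi$ along which no agent is an epistemic dictator at any decision node. The natural candidate is a root-to-leaf path passing through node $u_2$, since the surrounding discussion singles out $u_2$ as the one node hosting a classical ($win$) dictator. If even $u_2$ fails to host an \emph{epistemic} dictator, and the remaining nodes of $\pi$ clearly do too, the refutation is complete by Definition~\ref{elected epistemic dictatorship}.

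The core of the argument is to show that agent $B$ is not an epistemic dictator at $u_2$ (Definition~\ref{epistemic dictator}) by establishing that $u_2\notin ewin_B(\textit{Yes})$ (or, symmetrically, $u_2\notin ewin_B(\textit{No})$). This is exactly where the contrast with the classical case becomes the whole point. In Definition~\ref{win definition}, item~2 evaluates $Next^B_d(u_2)$ at the \emph{actual} node $u_2$, so the action witnessing $u_2\in win_B(\textit{Yes})$ may be tailored to $u_2$. In Definition~\ref{ewin}, item~2 instead requires a \emph{uniform} action $d$ with $Next^B_d([u_2]_B)\subseteq ewin_B(\textit{Yes})$, evaluated over the entire indistinguishability class $[u_2]_B$. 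Since $[u_2]_B$ contains a node indistinguishable from $u_2$ at which that same action $d$ does not lead into $ewin_B(\textit{Yes})$, item~2 fails for every action. I would then also check that item~3 of Definition~\ref{ewin} does not apply, i.e. that some action out of $u_2$ leaves $ewin_B(\textit{Yes})$, so that $u_2$ is genuinely excluded from this least fixed point. Computing the relevant $ewin$ sets by the backward induction of Definition~\ref{ewin}, working upward from the leaves, is the concrete work this step requires.

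It then remains to verify that no \emph{other} agent is an epistemic dictator at $u_2$ and that no agent is an epistemic dictator at any remaining decision node of $\pi$. For each such node $v$ and agent $a$ it suffices, by Definition~\ref{epistemic dictator}, to exhibit the single ``easy'' outcome $o$ with $v\notin ewin_a(o)$; typically one outcome is simply uncontrollable from $v$ along $\pi$, so only one membership test is needed per pair. Here I would reuse the $ewin$ computations already assembled for $B$ and extend them to the other agents.

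I expect the main obstacle to be the backward-induction bookkeeping inherent in the three-clause least-fixed-point of Definition~\ref{ewin}. Item~3 in particular can unexpectedly pull a node into $ewin_a(o)$ even when no uniform winning action exists, which is precisely the device that keeps $M$ epistemic-gap-free (Lemma~\ref{M1 lemma}). The delicate point is therefore to confirm that item~3 does \emph{not} rescue $u_2$ into \emph{both} $ewin_B(\textit{Yes})$ and $ewin_B(\textit{No})$, so that $B$ controls at most one outcome there and the witness path $\pi$ genuinely carries no epistemic dictator.
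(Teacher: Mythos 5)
Your proposal is correct and takes essentially the same approach as the paper: the paper's proof also fixes the root-to-leaf path $u_1,u_2,v_1$ through $u_2$, dispatches $u_1$ and $v_1$ with single $win$-based membership tests per agent (via Lemma~\ref{ewin sub win}), and refutes $u_2\in ewin_B(\textit{Yes})$ exactly along your lines---item~2 of Definition~\ref{ewin} fails for every action because $u_2\sim_B u_4$ forces $v_2\in ewin_B(\textit{Yes})$ despite $\ell(v_2)=\textit{No}$, and item~3 fails because action 2 reaches $u_6$ where $B$ lacks even a classical winning strategy, again using $ewin\subseteq win$. The only differences are cosmetic (the paper phrases the exclusion as a contradiction via case inversion rather than your direct closure check) plus one detail your blind hedge could not settle: in $M$ it is specifically the \emph{Yes} side that fails, since $u_2\in ewin_B(\textit{No})$ actually holds, so the ``symmetric'' alternative of refuting $u_2\in ewin_B(\textit{No})$ would not have worked.
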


Together, Lemma~\ref{M1 lemma} and Lemma~\ref{M2 lemma} show that mechanism $M$ provides a counterexample to the converse of Theorem~\ref{big result}.

In conclusion, observe that mechanism $M$, unlike the mechanism shown in Figure~\ref{uwin figure}, does not make any agent to forget something that she knew before by transitioning the agent into a ``confused'' state. Thus, this mechanism is suitable for agents with perfect recall.
Additionally, observe that the proof of Theorem~\ref{big result} does not rely on the specific definition of the function $ewin$. For example, if the notion of epistemic dictator and epistemic responsibility are defined using function $uwin$ instead of $ewin$, the proof of Theorem~\ref{big result} remains valid. Furthermore, the counterexample given by mechanism $M$ also works for many variations of $ewin$ definition. In particular, it is easy to see that the proofs of Lemma~\ref{M1 lemma} and Lemma~\ref{M2 lemma} remain valid if the notion of epistemic dictator and epistemic responsibility are defined using function $uwin$ instead of $ewin$. These observations show that the results of this section appear to be very general and are not artifacts of the specifics of Definition~\ref{ewin}.

\section{Elected Semi-epistemic Dictatorship}

In Theorem~\ref{big result}, we have shown that, for the mechanisms with imperfect information, the set of elected epistemic dictatorships is a subset of the set of epistemic-gap-free mechanisms. We used the mechanism $M$ to show that the former set is a {\em proper} subset of the latter. Given that Theorem~\ref{big result perfect info} shows that these sets are equal for the mechanism with perfect information, it is natural to ask if the notion of elected epistemic dictatorship could be made {\em weaker} so that the modified set of elected epistemic dictatorships includes the set of all epistemic-gap-free mechanisms. In this section, we propose such a modification: elected semi-epistemic dictatorship.

\begin{definition}\label{semi-epistemic dictator}
An agent $a\in\mathcal{A}$ is a {\bf\em semi-epistemic dictator} at a node $v$ if there exists an outcome $o\in\{\text{Yes},\text{No}\}$ such that 
$v\in ewin_a(o)$ and $v\in win_a(\overline{o})$.
\end{definition}

As an example, note that agent $B$ is a semi-epistemic dictator at node $u_2$ of the mechanism $M$ depicted in Figure~\ref{dic-nogap figure}. Indeed, it is easy to verify that $u_2\in ewin_B(\textit{No})$ and  $u_2\in win_B(\textit{Yes})$.

\begin{definition}\label{semi-epistemic dictatorship}
A mechanism with imperfect information is an {\bf\em elected semi-epistemic dictatorship} if, for each root-to-leaf decision path, there is a semi-epistemic dictator at a node on this path.
\end{definition}

An example of an elected semi-epistemic dictatorship is mechanism $M$ depicted in Figure~\ref{dic-nogap figure}. Indeed, agents $B$ and $C$ are semi-epistemic dictators at nodes $u_2$ and $u_3$, respectively. In fact, $C$ is not just a semi-epistemic dictator, but also an epistemic dictator at node $u_3$.

\begin{figure}[ht]
\begin{center}
\scalebox{0.49}{\includegraphics{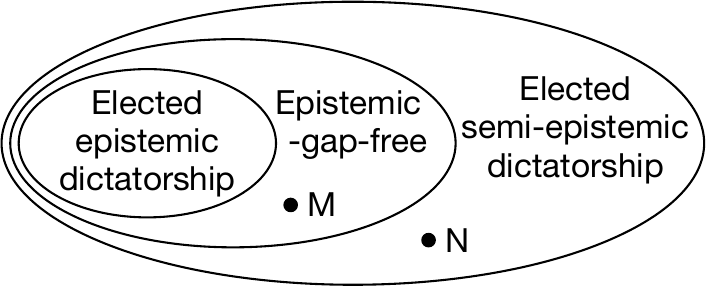}}
\caption{Three sets of decision-making mechanisms with imperfect information. Mechanisms $M$ and $N$ are shown in Figure~\ref{dic-nogap figure} and Figure~\ref{straw give up figure}, respectively.}\label{venn diagram figure}
\end{center}
\end{figure}

In this section, we prove that epistemic-gap-free mechanisms form a subset of elected semi-epistemic dictatorships, see Figure~\ref{venn diagram figure}. We prove this result in Theorem~\ref{big result two}. 



\begin{lemma}\label{step-up-lemma}
For any root-to-node decision path $v_1,\dots,v_k$ of an epistemic-gap-free mechanism, if $v_k\in ewin_a(o)$ and $v_k\notin win_a(\overline{o})$, then there exists $i<k$ and $b\in\mathcal{A}$ such that $v_i\in ewin_b(\overline{o})$.
{\em [proof in Appx. A]}
\end{lemma}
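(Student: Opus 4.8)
The plan is to prove the lemma by induction on the \emph{height} of the terminal node $v_k$, i.e., on the length of the longest decision path issuing from $v_k$; this is well defined because the tree has finite depth. I would phrase the statement uniformly over all agents, all outcomes, and all root-to-node paths, so that the induction hypothesis can be reapplied after extending the given path by one edge.

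For the base case, suppose $v_k$ is a leaf. Since a leaf can enter $ewin_a(o)$ only through item~1 of Definition~\ref{ewin} (items~2 and~3 apply to decision nodes), the hypothesis $v_k\in ewin_a(o)$ forces $\ell(v_k)=o$, so $\overline{\ell(v_k)}=\overline{o}$. Epistemic-gap-freeness (Definition~\ref{epistemic gap-free}) then yields an agent $b$ epistemically responsible at $v_k$, that is, a decision node $u$ on the root-to-$v_k$ path with $u\in ewin_b(\overline{o})$ (Definition~\ref{epistemically responsible}). As $u$ is a decision node and $v_k$ is a leaf, $u=v_i$ for some $i<k$, which is the desired conclusion; note the second hypothesis $v_k\notin win_a(\overline{o})$ is not needed here.

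For the inductive step, suppose $v_k$ is a decision node. First I would produce a child $y$ of $v_k$ with $y\in ewin_a(o)$ and $y\notin win_a(\overline{o})$. Since $v_k$ is not a leaf, it entered $ewin_a(o)$ either through item~2 (giving an action $d_0$ with $Next^a_{d_0}(v_k)\subseteq ewin_a(o)$, using $v_k\in[v_k]_a$) or through item~3 (so every child of $v_k$ lies in $ewin_a(o)$); combining this with $v_k\notin win_a(\overline{o})$, which guarantees a child outside $win_a(\overline{o})$ reachable under the relevant action, yields such a $y$. Applying the induction hypothesis to the extended path $v_1,\dots,v_k,y$, whose terminal node has strictly smaller height, gives an index $j<k+1$ and an agent $b$ with the $j$-th node in $ewin_b(\overline{o})$; for $j\le k$ that node is $v_j$. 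If $j<k$ we are done, so the only remaining case is $j=k$, i.e.\ $v_k\in ewin_b(\overline{o})$.

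The main obstacle is exactly this last case, where the induction returns the witness at $v_k$ itself rather than strictly above it. I would dispose of it using the auxiliary fact that, for two \emph{distinct} agents $a\neq b$, the regions $win_a(o)$ and $win_b(\overline{o})$ are disjoint: a node lying in both would let $a$ and $b$ independently follow positional strategies forcing opposite outcomes along one and the same play, which is impossible since the play ends in a single leaf. In the case $j=k$ we first note $b\neq a$, because $v_k\in ewin_b(\overline{o})\subseteq win_b(\overline{o})$ while $v_k\notin win_a(\overline{o})$. Then, by Lemma~\ref{ewin sub win}, $v_k\in ewin_a(o)\subseteq win_a(o)$ and $v_k\in win_b(\overline{o})$, contradicting disjointness for the distinct agents $a$ and $b$. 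Hence $j=k$ cannot occur, so $j<k$ and the induction closes. It is worth stressing that distinctness of the agents is essential: for a single agent the analogous intersection is nonempty, as a dictator node (Definition~\ref{dictator}) lies in both $win_a(\text{Yes})$ and $win_a(\text{No})$, and it is precisely the hypothesis $v_k\notin win_a(\overline{o})$ that supplies $b\neq a$.
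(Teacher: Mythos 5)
Your proof is correct and is essentially the paper's own argument repackaged as an induction on the height of $v_k$: your child-producing step is exactly the paper's step-down lemma (Lemma~\ref{step-down-lemma}), your base case is the paper's single application of gap-freeness at the leaf obtained by iterating that step, and your elimination of the $j=k$ case via $ewin_a(o)\subseteq win_a(o)$ (Lemma~\ref{ewin sub win}) together with the disjointness $win_a(o)\cap win_b(\overline{o})=\varnothing$ for distinct agents (Lemma~\ref{two hares}) is the same exclusion the paper performs simultaneously for all indices $j\ge k$ along the extended path. There is no gap; the only difference is organisational -- the paper extends the path to a leaf in one pass and then invokes these facts once, while you invoke them level by level as the induction unwinds.
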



\begin{theorem}\label{big result two}
Any epistemic-gap-free mechanism with imperfect information is an elected semi-epistemic dictatorship. 
\end{theorem}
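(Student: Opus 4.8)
The plan is to fix an arbitrary root-to-leaf decision path $v_1,\dots,v_n$ (so $v_n\in L$), write $o=\ell(v_n)$ for its outcome, and exhibit a semi-epistemic dictator at one of its nodes; since the path is arbitrary, this establishes that the mechanism is an elected semi-epistemic dictatorship by Definition~\ref{semi-epistemic dictatorship}. First I would invoke the epistemic-gap-freeness hypothesis: by Definition~\ref{epistemic gap-free} some agent is epistemically responsible at $v_n$, so by Definition~\ref{epistemically responsible} there is a decision node $v_i$ with $i<n$ and an agent $a$ such that $v_i\in ewin_a(\overline o)$. This gives a foothold, namely a node on the path lying in $ewin$ of some outcome.

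Second, I would run a descent on the path index. Suppose in general that we have reached a node $v_p$, an agent $x$, and an outcome $q$ with $v_p\in ewin_x(q)$, and split into two cases. If $v_p\in win_x(\overline q)$, then taking the witness outcome to be $q$ in Definition~\ref{semi-epistemic dictator}, agent $x$ is a semi-epistemic dictator at $v_p$, and we are done. Otherwise $v_p\notin win_x(\overline q)$, and I apply Lemma~\ref{step-up-lemma} to the root-to-node path $v_1,\dots,v_p$ (with its outcome instantiated as $q$), whose hypotheses are exactly $v_p\in ewin_x(q)$ and $v_p\notin win_x(\overline q)$. The lemma returns an index $p'<p$ and an agent $x'$ with $v_{p'}\in ewin_{x'}(\overline q)$, i.e. a strictly earlier node on the path that again lies in $ewin$ of some (now flipped) outcome. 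Starting the descent from the foothold $v_i$, each non-terminating step strictly decreases the index.

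Finally, termination closes the argument. Since the index is a positive integer, it cannot decrease indefinitely, so the descent must halt, and it can only halt in the first case, which yields a semi-epistemic dictator. The base case is automatic: at the root $v_1$ the second case is impossible, because Lemma~\ref{step-up-lemma} applied to the one-node path $v_1$ cannot return an index $p'<1$; hence whenever $v_1\in ewin_x(q)$ we must already have $v_1\in win_x(\overline q)$, forcing the dictator case. I expect the real content of the theorem to be packaged inside Lemma~\ref{step-up-lemma}; granting it, the only things to get right here are the bookkeeping of the flipping outcome $q\mapsto\overline q$ across iterations and the observation that a leaf can never be a semi-epistemic dictator, so the descent legitimately lands on a decision node. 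Thus the well-founded descent, rather than any single computation, is the main—and fairly modest—obstacle.
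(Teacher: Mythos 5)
Your proposal is correct and is essentially the paper's own argument: both reduce the theorem to Lemma~\ref{step-up-lemma} (epistemic-gap-freeness supplies the foothold $v_j\in ewin_b(\overline{\ell(v_n)})$, and the lemma drives the index strictly down with the outcome flipping at each step), the only difference being that you run an explicit well-founded descent terminating in the dictator case, while the paper packages the same descent as a contradiction with a minimal index $j_{\min}$ satisfying $\exists a\,\exists o\,(v_{j_{\min}}\in ewin_a(o))$. Your side remarks are also sound: prefixes of the path with index below $n$ are automatically decision nodes, and a leaf can never be a semi-epistemic dictator since $v\in ewin_a(o)$ forces $\ell(v)=o$ while $v\in win_a(\overline{o})$ forces $\ell(v)=\overline{o}$.
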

\begin{proof}
Consider any root-to-leaf decision path $v_1,\dots,v_n$. By Definition~\ref{semi-epistemic dictatorship}, it suffices to show that there is a semi-epistemic dictator at one of the decision nodes of this path. Suppose the opposite. Thus, 
\begin{equation}\label{30-oct-a}
\begin{aligned}
&\text{if $v_i\in ewin_a(o)$, then $v_i\notin win_a(\overline{o})$}\\
&\text{for each $i<n, a\in\mathcal{A},o\in \{\text{Yes},\text{No}\}$}.
\end{aligned}   
\end{equation}
At the same time, by Definition~\ref{epistemic gap-free}, the assumption that the mechanism is epistemic-gap-free implies that there is an agent $b\in\mathcal{A}$ epistemically responsible at leaf node $v_n$. Hence, by Definition~\ref{epistemically responsible}, there is $j<n$ such that $v_j\in ewin_b(\overline{\ell(v_n)})$. Note that $v_j\in ewin_b(\overline{\ell(v_n)})$ implies
$\exists a\exists o (v_j\in ewin_a(o))$.
Thus, $\exists a\exists o (v_j\in ewin_a(o))$.
Then,
there must exist the minimal $j_{\min}\ge 1$ such that
\begin{equation}\label{30-oct-b}
\exists a\exists o (v_{j_{\min}}\in ewin_a(o)).    
\end{equation}
Hence, $v_{j_{\min}}\in ewin_{a'}(o')$ for some agent $a'\in\mathcal{A}$ and some outcome  $o'\in \{\text{Yes},\text{No}\}$. Thus, $v_{j_{\min}}\notin win_{a'}(\overline{o'})$ by statement~\eqref{30-oct-a}.  Then, by Lemma~\ref{step-up-lemma} and the assumption of the theorem that the mechanism is epistemic-gap-free, there exists $j'<j_{\min}$ and an agent $b'\in\mathcal{A}$ such that $v_{j'}\in ewin_{b'}(\overline{o'})$. The last statement contradicts the choice of $j_{\min}$ as the minimal one satisfying condition~\eqref{30-oct-b}.    
\end{proof}

In Theorem~\ref{big result two}, we have shown that the set of epistemic-gap-free mechanisms is a subset of the set of elected semi-epistemic dictatorships. This subset is {\em proper}, see Figure~\ref{venn diagram figure}. To prove this, let us modify the Drawing Straws mechanism depicted in Figure~\ref{straw figure}. Recall that this mechanism is sometimes used to determine the outcome of an election when votes are evenly divided. Suppose that outcome {\em Yes} means that Bob (and not Alice) becomes an elected official. We modify the Drawing Straws mechanism by giving Bob an option to give up the race and let Alice to become the elected official. This option is represented by action 2 in Figure~\ref{straw give up figure} showing the modified mechanism $N$.

\begin{figure}[ht]
\begin{center}
\scalebox{0.49}{\includegraphics{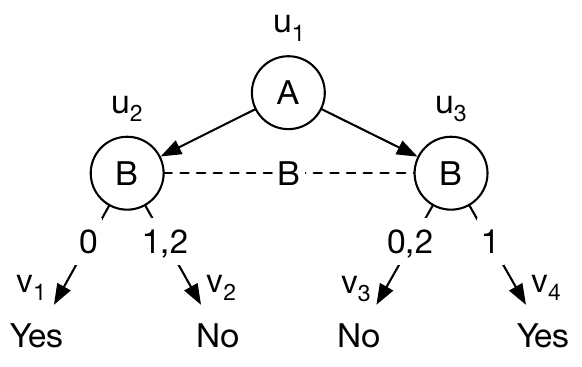}}
\footnotesize
\caption{\footnotesize Mechanism $N$ with imperfect information. The names of actions are only shown for the nodes in which the acting agent does not have complete information about the current node.}\label{straw give up figure}
\end{center}
\end{figure}

\begin{lemma}\label{N1 lemma}
Mechanism $N$ with imperfect information is an elected semi-epistemic dictatorship. {\em [proof in Appx. A]}
\end{lemma}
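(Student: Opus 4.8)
The plan is to exhibit Bob ($B$) as a semi-epistemic dictator at each of the two nodes $u_2$ and $u_3$ where he draws a straw, and then to observe that every root-to-leaf path of $N$ must pass through one of these nodes. First I would record the structure of $N$ inherited from the Drawing Straws mechanism: the two Yes-leaves $v_1,v_4$ reached by the ``correct'' draw, the two No-leaves reached by the ``wrong'' draw, and---new in $N$---the two additional No-leaves reached by Bob's give-up action $2$ from $u_2$ and from $u_3$ (since ``give up'' makes Alice the elected official, its outcome is No). I would also note that Bob cannot tell $u_2$ from $u_3$, so $[u_2]_B=[u_3]_B=\{u_2,u_3\}$, while Alice alone acts at the root $u_1$, whose only children are $u_2$ and $u_3$.

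Second I would verify the two membership facts that make Bob a semi-epistemic dictator. For $ewin_B(\text{No})$: the give-up action $2$ is available at both nodes of the class $\{u_2,u_3\}$ (as Definition~\ref{imperfect info mechanism} forces indistinguishable nodes to share an action set) and leads to a No-leaf from each, so $Next^B_2([u_2]_B)\subseteq \ell^{-1}(\text{No})\subseteq ewin_B(\text{No})$; item~2 of Definition~\ref{ewin} with $d=2$ then places both $u_2$ and $u_3$ in $ewin_B(\text{No})$. For $win_B(\text{Yes})$: this is inherited from the original mechanism, since under perfect information Bob has, at $u_2$, an action yielding the Yes-leaf $v_1$ and, at $u_3$, an action yielding the Yes-leaf $v_4$, so item~2 of Definition~\ref{win definition} gives $u_2,u_3\in win_B(\text{Yes})$. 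Taking $o=\text{No}$ (so $\overline{o}=\text{Yes}$), Definition~\ref{semi-epistemic dictator} then certifies Bob as a semi-epistemic dictator at both $u_2$ and $u_3$. Finally, since $u_1$'s only children are $u_2$ and $u_3$, every root-to-leaf decision path passes through one of them, so by Definition~\ref{semi-epistemic dictatorship} the mechanism $N$ is an elected semi-epistemic dictatorship.

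The only delicate step is the $ewin_B(\text{No})$ membership: the point is that ``give up'' is a single \emph{uniform} action that succeeds simultaneously at both indistinguishable nodes, which is exactly what item~2 of Definition~\ref{ewin} requires. This is the asymmetry with the Yes-outcome, where the correct draw differs between $u_2$ and $u_3$ so that no uniform action guarantees Yes and hence $u_2,u_3\notin ewin_B(\text{Yes})$; it is precisely this asymmetry that makes Bob only a \emph{semi}-epistemic (rather than a full epistemic) dictator. Everything else is a finite check over the root-to-leaf paths.
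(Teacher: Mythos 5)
Your proof is correct and follows essentially the same route as the paper's: you certify Bob as a semi-epistemic dictator at $u_2$ and $u_3$ by showing $u_2,u_3\in ewin_B(\text{No})$ via the uniform give-up action $2$ over the class $[u_2]_B=\{u_2,u_3\}$ (item~2 of Definition~\ref{ewin}) and $u_2,u_3\in win_B(\text{Yes})$ via the node-specific correct draws (item~2 of Definition~\ref{win definition}), exactly the memberships the paper verifies. Your additional explicit remark that every root-to-leaf path passes through $u_2$ or $u_3$ is a detail the paper leaves implicit, but it does not change the argument.
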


\begin{lemma}\label{N2 lemma}
Mechanism $N$ with imperfect information is not epistemic-gap-free.
{\em [proof in Appx. A]}
\end{lemma}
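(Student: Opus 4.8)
The plan is to exhibit a single leaf node of $N$ at which \emph{no} agent is epistemically responsible; by Definition~\ref{epistemic gap-free} this suffices to show that $N$ is not epistemic-gap-free. The natural candidate is a leaf labelled \textit{No} that is reached when Bob actually draws the short straw (rather than exercising the new give-up action). Call this leaf $w$. Its root-to-leaf path passes through the root $u_1$, where Alice fixes the arrangement, and through one of the two Bob-nodes $u_2,u_3$ which Alice can distinguish but Bob cannot (so $u_2\sim_B u_3$). Since $\ell(w)=\textit{No}$, Definition~\ref{epistemically responsible} requires, for responsibility, that some agent $a$ have a decision node $u$ on this path with $u\in ewin_a(\textit{Yes})$. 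Hence it is enough to prove that neither $ewin_A(\textit{Yes})$ nor $ewin_B(\textit{Yes})$ contains any decision node.

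First I would compute $ewin_B(\textit{Yes})$. Writing $[u_2]_B=\{u_2,u_3\}$ for Bob's single information set, I would test each action $d$ against item~2 of Definition~\ref{ewin}: because the arrangement is swapped between $u_2$ and $u_3$, each straw action sends the pair $\{u_2,u_3\}$ to one \textit{Yes} leaf and one \textit{No} leaf, while the give-up action sends both to \textit{No} leaves; thus no single action makes $Next^B_d([u_2]_B)$ a subset of the \textit{Yes} leaves. For item~3 I would observe that at each individual Bob-node the union over all actions includes a \textit{No} leaf (the give-up leaf), so the ``confusion'' clause also fails. This gives $u_2,u_3\notin ewin_B(\textit{Yes})$, from which $u_1\notin ewin_B(\textit{Yes})$ follows at once, since the only children reachable from $u_1$ are $u_2$ and $u_3$. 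Therefore $ewin_B(\textit{Yes})$ contains no decision node.

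Next I would treat Alice. At $u_1$ her action merely selects $u_2$ or $u_3$, and at each Bob-node her action is irrelevant, so $Next^A_d(u_2)$ (and $Next^A_d(u_3)$) equals the full child set, which always includes a \textit{No} leaf. Consequently neither $u_2$ nor $u_3$ enters $ewin_A(\textit{Yes})$ through item~2 or item~3, and then $u_1$ cannot either. Hence $ewin_A(\textit{Yes})$ likewise contains no decision node. Combining the two computations, no decision node on the path to $w$ lies in $ewin_A(\textit{Yes})\cup ewin_B(\textit{Yes})$, so by Definition~\ref{epistemically responsible} no agent is epistemically responsible at $w$, and $N$ is not epistemic-gap-free.

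The step I expect to be the main obstacle is the verification for Bob: one must confirm that the added give-up action does not inadvertently create a uniform winning action for \textit{Yes}, and, more subtly, that the permissive item~3 of Definition~\ref{ewin} does not rescue responsibility — which holds precisely because at each Bob-node the give-up action leads to \textit{No}, so not all actions point into $ewin_B(\textit{Yes})$. Everything else reduces to the routine observation that whenever a non-deciding agent formally ``acts'' at a node, the corresponding $Next$ set is the whole child set and so cannot be contained in a single-outcome target set.
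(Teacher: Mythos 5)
Your proposal is correct and follows essentially the same route as the paper: the paper likewise observes that $ewin_A(\textit{Yes})=ewin_B(\textit{Yes})=\{v_1,v_4\}$ (no decision node enters either set, for exactly the reasons you verify against items~2 and~3 of Definition~\ref{ewin}) and then concludes that no agent is epistemically responsible at a \textit{No} leaf whose path runs through $u_1$ and $u_2$. The only cosmetic difference is the witness leaf --- the paper uses the give-up leaf $v_2$ while you use a short-straw \textit{No} leaf --- which is immaterial since both paths contain the same decision nodes; your explicit check that the give-up action blocks both item~2 and item~3 is the right key verification, which the paper leaves implicit.
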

Together, the last two lemmas show that the set of epistemic-gap-free mechanisms is a {\em proper} subset of the set of elected semi-epistemic dictatorships, see Figure~\ref{venn diagram figure}.

\section{Conclusion}

This paper contains two main results. First, in the perfect information case, the only way to avoid a responsibility gap in a decision-making mechanism is to use an ``elected dictatorship'', where the agents agree on a single person who makes the decision. The converse is also true: any elected dictatorship is gap-free. Second, in the imperfect information case, the situation is more complicated: epistemic-gap-free mechanisms are ``squeezed'' between elected epistemic and elected semi-epistemic dictatorships. Intuitively, our results mean that to construct non-dictatorial gap-free mechanisms, one needs to consider other forms of responsibility.

\begin{figure}[ht]
\begin{center}
\scalebox{0.49}{\includegraphics{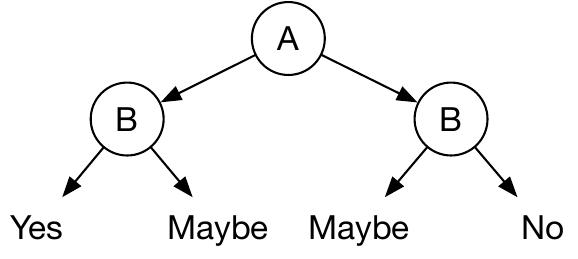}}
\caption{Gap-free mechanism with three alternatives which is not an elected dictatorship.}\label{three-alt figure}
\end{center}
\end{figure}

Note that all our results assume that the decision is binary ({\em Yes/No}). If the third (``{\em Maybe}'') alternative is added, then our results are no longer true. For example, Figure~\ref{three-alt figure} depicts a gap-free mechanism (agent $B$ can prevent any specific outcome) in which none of the agents is a dictator (has a strategy to guarantee any outcome) at any of the nodes.

\bibliographystyle{named}
\bibliography{naumov}

\begin{thebibliography}{}

\bibitem[\protect\citeauthoryear{Baier \bgroup \em et al.\egroup
  }{2021}]{bfm21ijcai}
Christel Baier, Florian Funke, and Rupak Majumdar.
\newblock A game-theoretic account of responsibility allocation.
\newblock In {\em 30th International Joint Conference on Artificial
  Intelligence (IJCAI-21)}, 2021.

\bibitem[\protect\citeauthoryear{Braham and van Hees}{2011}]{bh11pq}
Matthew Braham and Martin van Hees.
\newblock {Responsibility Voids}.
\newblock {\em The Philosophical Quarterly}, 61(242):6--15, 12 2011.

\bibitem[\protect\citeauthoryear{Braham and van Hees}{2018}]{bh18ej}
Matthew Braham and Martin van Hees.
\newblock Voids or fragmentation: Moral responsibility for collective outcomes.
\newblock {\em The Economic Journal}, 128(612):F95--F113, 2018.

\bibitem[\protect\citeauthoryear{Burton \bgroup \em et al.\egroup
  }{2020}]{bhlmmp20ai}
Simon Burton, Ibrahim Habli, Tom Lawton, John McDermid, Phillip Morgan, and Zoe
  Porter.
\newblock Mind the gaps: Assuring the safety of autonomous systems from an
  engineering, ethical, and legal perspective.
\newblock {\em Artificial Intelligence}, 279:103201, 2020.

\bibitem[\protect\citeauthoryear{Champagne and Tonkens}{2015}]{ct15pt}
Marc Champagne and Ryan Tonkens.
\newblock Bridging the responsibility gap in automated warfare.
\newblock {\em Philosophy \& Technology}, 28:125--137, 2015.

\bibitem[\protect\citeauthoryear{Coeckelbergh}{2020}]{c20see}
Mark Coeckelbergh.
\newblock Artificial intelligence, responsibility attribution, and a relational
  justification of explainability.
\newblock {\em Science and engineering ethics}, 26(4):2051--2068, 2020.

\bibitem[\protect\citeauthoryear{Dastani and Yazdanpanah}{2023}]{dy23as}
Mehdi Dastani and Vahid Yazdanpanah.
\newblock Responsibility of ai systems.
\newblock {\em Ai \& Society}, 38(2):843--852, 2023.

\bibitem[\protect\citeauthoryear{Duijf and van De~Putte}{2022}]{dp22scw}
Hein Duijf and Frederik van De~Putte.
\newblock The problem of no hands: responsibility voids in collective
  decisions.
\newblock {\em Social Choice and Welfare}, 58(4):753--790, 2022.

\bibitem[\protect\citeauthoryear{Duijf}{2018}]{d18pss}
Hein Duijf.
\newblock Responsibility voids and cooperation.
\newblock {\em Philosophy of the social sciences}, 48(4):434--460, 2018.

\bibitem[\protect\citeauthoryear{Duijf}{2022}]{d22}
Hein Duijf.
\newblock {\em The logic of responsibility voids}.
\newblock Springer, 2022.

\bibitem[\protect\citeauthoryear{Elgot}{2017}]{e17guardian}
Jessica Elgot.
\newblock Lib dem and tory candidates draw straws in northumberland vote.
\newblock
  \url{https://www.theguardian.com/politics/2017/may/05/lib-dem-and-tory-candidates-draw-straws-in-northumberland-vote},
  2017.
\newblock {Accessed: 2025-01-15}.

\bibitem[\protect\citeauthoryear{Fervari \bgroup \em et al.\egroup
  }{2017}]{fhlw17ijcai}
Raul Fervari, Andreas Herzig, Yanjun Li, and Yanjing Wang.
\newblock Strategically knowing how.
\newblock In {\em Proceedings of the Twenty-Sixth International Joint
  Conference on Artificial Intelligence, {IJCAI-17}}, pages 1031--1038, 2017.

\bibitem[\protect\citeauthoryear{Frankfurt}{1969}]{f69tjop}
Harry~G Frankfurt.
\newblock Alternate possibilities and moral responsibility.
\newblock {\em The Journal of Philosophy}, 66(23):829--839, 1969.

\bibitem[\protect\citeauthoryear{Gunkel}{2020}]{g20eit}
David~J Gunkel.
\newblock Mind the gap: responsible robotics and the problem of responsibility.
\newblock {\em Ethics and Information Technology}, 22(4):307--320, 2020.

\bibitem[\protect\citeauthoryear{Hindriks and Veluwenkamp}{2023}]{hv23synthese}
Frank Hindriks and Herman Veluwenkamp.
\newblock The risks of autonomous machines: from responsibility gaps to control
  gaps.
\newblock {\em Synthese}, 201(1):21, 2023.

\bibitem[\protect\citeauthoryear{K{\"o}nigs}{2022}]{k22eit}
Peter K{\"o}nigs.
\newblock Artificial intelligence and responsibility gaps: what is the problem?
\newblock {\em Ethics and Information Technology}, 24(3):36, 2022.

\bibitem[\protect\citeauthoryear{List}{2006}]{l06ethics}
Christian List.
\newblock The discursive dilemma and public reason.
\newblock {\em Ethics}, 116(2):362--402, 2006.

\bibitem[\protect\citeauthoryear{List}{2021}]{l21pt}
Christian List.
\newblock Group agency and artificial intelligence.
\newblock {\em Philosophy \& technology}, 34(4):1213--1242, 2021.

\bibitem[\protect\citeauthoryear{Matthias}{2004}]{m04eit}
Andreas Matthias.
\newblock The responsibility gap: Ascribing responsibility for the actions of
  learning automata.
\newblock {\em Ethics and information technology}, 6:175--183, 2004.

\bibitem[\protect\citeauthoryear{Naumov and Tao}{2018}]{nt18ai}
Pavel Naumov and Jia Tao.
\newblock Together we know how to achieve: An epistemic logic of know-how.
\newblock {\em Artificial Intelligence}, 262:279 -- 300, 2018.

\bibitem[\protect\citeauthoryear{Naumov and Tao}{2019}]{nt19aaai}
Pavel Naumov and Jia Tao.
\newblock Blameworthiness in strategic games.
\newblock In {\em Proceedings of Thirty-third AAAI Conference on Artificial
  Intelligence (AAAI-19)}, 2019.

\bibitem[\protect\citeauthoryear{Naumov and Tao}{2020a}]{nt20aaai}
Pavel Naumov and Jia Tao.
\newblock Blameworthiness in security games.
\newblock In {\em Proceedings of Thirty-Fourth AAAI Conference on Artificial
  Intelligence (AAAI-20)}, 2020.

\bibitem[\protect\citeauthoryear{Naumov and Tao}{2020b}]{nt20ai}
Pavel Naumov and Jia Tao.
\newblock An epistemic logic of blameworthiness.
\newblock {\em Artificial Intelligence}, 283, June 2020.
\newblock 103269.

\bibitem[\protect\citeauthoryear{Oimann}{2023}]{o23pt}
Ann-Katrien Oimann.
\newblock The responsibility gap and laws: A critical mapping of the debate.
\newblock {\em Philosophy \& Technology}, 36(1):3, 2023.

\bibitem[\protect\citeauthoryear{Santoni~de Sio and Mecacci}{2021}]{sm21pt}
Filippo Santoni~de Sio and Giulio Mecacci.
\newblock Four responsibility gaps with artificial intelligence: Why they
  matter and how to address them.
\newblock {\em Philosophy \& Technology}, 34(4):1057--1084, 2021.

\bibitem[\protect\citeauthoryear{Shi and Naumov}{2025}]{sn25jpl}
Qi~Shi and Pavel Naumov.
\newblock Responsibility in multi-step decision schemes.
\newblock {\em Journal of Philosophical Logic}, 2025.

\bibitem[\protect\citeauthoryear{Tigard}{2021}]{t21pt}
Daniel~W Tigard.
\newblock There is no techno-responsibility gap.
\newblock {\em Philosophy \& Technology}, 34(3):589--607, 2021.

\bibitem[\protect\citeauthoryear{{United States Air Force}}{2024}]{usaf24}
{United States Air Force}.
\newblock Launching missiles.
\newblock
  \url{https://www.nationalmuseum.af.mil/Visit/Museum-Exhibits/Fact-Sheets/Display/Article/197675/},
  2024.
\newblock Accessed: 2024-09-24.

\bibitem[\protect\citeauthoryear{Widerker and McKenna}{2003}]{w17}
David Widerker and Michael McKenna, editors.
\newblock {\em Moral responsibility and alternative possibilities: Essays on
  the importance of alternative possibilities}.
\newblock Ashgate, Burlington, VT, 2003.

\bibitem[\protect\citeauthoryear{Yazdanpanah \bgroup \em et al.\egroup
  }{2019}]{ydjal19aamas}
Vahid Yazdanpanah, Mehdi Dastani, Wojciech Jamroga, Natasha Alechina, and Brian
  Logan.
\newblock Strategic responsibility under imperfect information.
\newblock In {\em Proceedings of the 18th International Conference on
  Autonomous Agents and MultiAgent Systems}, pages 592--600. International
  Foundation for Autonomous Agents and Multiagent Systems, 2019.

\end{thebibliography}

\clearpage

\begin{center}
\Large\sc Technical Appendix    
\end{center}

The lemmas 1 through 6 have been stated in the main part of this. As a result of this, they have smaller numbers.

\appendix

\section{Auxiliary Lemmas}\label{Proofs of Lemmas section}

The next lemma follows from the contraposition of item~2 in Definition~\ref{win definition}. This lemma is used in the proof of Lemma~\ref{step-down-lemma}. 
\begin{lemma}\label{next-exists-lemma}
For any decision node $v\notin win_a(o)$ and any action $d\in\Delta^a_v$, we have $Next^a_d(v)\nsubseteq win_a(o)$. 
\end{lemma}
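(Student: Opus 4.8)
The plan is to obtain the statement as the direct contrapositive of the closure property in item~2 of Definition~\ref{win definition}. Since $win_a(o)$ is, by definition, a subset of $V$ satisfying the two closure conditions, it in particular satisfies condition~2: for every decision node $v\in D$ and every action $d\in\Delta^a_v$, the inclusion $Next^a_d(v)\subseteq win_a(o)$ implies $v\in win_a(o)$.

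First I would fix the decision node $v\notin win_a(o)$ and the action $d\in\Delta^a_v$ given in the hypothesis of the lemma. Applying the above implication to this particular pair $(v,d)$ and taking its contrapositive, the assumption $v\notin win_a(o)$ yields $Next^a_d(v)\nsubseteq win_a(o)$, which is exactly the desired conclusion.

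The only point that requires care is the observation that $win_a(o)$ \emph{itself} (and not merely some superset) satisfies condition~2 of Definition~\ref{win definition}; this holds because the definition describes $win_a(o)$ as the smallest set closed under conditions~1 and~2, so it is in particular closed under condition~2. Beyond this there is no obstacle: the argument is a single step of contraposition applied to a closure rule, requiring neither induction nor backward reasoning on the tree structure.
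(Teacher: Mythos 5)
Your proposal is correct and matches the paper exactly: the paper likewise dispatches this lemma as an immediate contraposition of item~2 of Definition~\ref{win definition}, offering no further argument. Your added remark that $win_a(o)$ itself satisfies the closure condition (being the smallest set closed under items~1 and~2) is the right justification and is implicit in the paper's one-line proof.
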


The following two lemmas are directly used in the proof of Lemma~\ref{step-up-lemma}.
\begin{lemma}\label{two hares} 
$win_a(o)\cap win_b(\overline{o})=\varnothing$ for each distinct agents $a,b\in \mathcal{A}$ and each outcome $o\in  \{\text{Yes},\text{No}\}$.
\end{lemma}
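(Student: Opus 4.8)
The plan is to prove Lemma~\ref{two hares} by structural (backward) induction on the tree, establishing for every node $v$ that $v\notin win_a(o)\cap win_b(\overline{o})$. This induction is legitimate precisely because the tree has no infinite depth: there is no infinite descending chain of children, so ``assume the claim for all children of $v$ and deduce it for $v$'' is a valid induction principle even when $v$ has infinitely many children. The intuitive content is that if $a$ could unilaterally force $o$ from $v$ while the \emph{distinct} agent $b$ could unilaterally force $\overline{o}$ from $v$, the two could play their forcing actions simultaneously in a single action profile, and determinism of $\tau_v$ would produce one child that must satisfy both incompatible guarantees.

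First I would handle the base case where $v$ is a leaf. The closure rule of item~2 in Definition~\ref{win definition} applies only to decision nodes, so a leaf can enter $win_a(o)$ only via item~1, which forces $\ell(v)=o$; symmetrically $v\in win_b(\overline{o})$ forces $\ell(v)=\overline{o}$. Since $o\neq\overline{o}$, a leaf cannot lie in the intersection.

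For the inductive step, suppose $v$ is a decision node and, for contradiction, $v\in win_a(o)\cap win_b(\overline{o})$. Because $win_a(o)$ is the least set closed under the rules of Definition~\ref{win definition}, membership of the decision node $v$ must come from item~2, yielding an action $d_a\in\Delta^a_v$ with $Next^a_{d_a}(v)\subseteq win_a(o)$; likewise there is $d_b\in\Delta^b_v$ with $Next^b_{d_b}(v)\subseteq win_b(\overline{o})$. The crucial move, and the reason distinctness of $a$ and $b$ is essential, is that since $a\neq b$ these two prescriptions do not conflict, so I can form a single action profile $\delta$ at $v$ with $\delta_a=d_a$ and $\delta_b=d_b$. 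Let $u=\tau_v(\delta)$ be the unique resulting child. Then $u\in Next^a_{d_a}(v)\subseteq win_a(o)$ and $u\in Next^b_{d_b}(v)\subseteq win_b(\overline{o})$ by Definition~\ref{next}, so $u$ lies in the intersection, contradicting the induction hypothesis applied to the child $u$.

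The main obstacle I anticipate is not conceptual but a matter of bookkeeping: justifying that membership of a decision node $v$ in $win_a(o)$ really does supply a \emph{single} witnessing action $d_a$ whose entire $Next$-image sits inside $win_a(o)$. This is exactly the least-fixed-point content of item~2, so I would either invoke that characterisation directly or phrase the induction as an induction on the stage at which $v$ is added to $win_a(o)$. The determinism of $\tau_v$, together with $a\neq b$, is what drives the contradiction; if $a=b$ the two witnessing actions could be incompatible, and indeed the conclusion can fail, which is consistent with the hypothesis that the agents be distinct.
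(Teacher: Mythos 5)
Your proposal is correct and follows essentially the same argument as the paper: the leaf case via item~1 of Definition~\ref{win definition}, and the decision-node case by merging the two witnessing actions $d_a$ and $d_b$ (possible since $a\neq b$ and all action sets are nonempty) into a single profile $\delta$ so that the child $\tau_v(\delta)$ inherits both memberships. The only cosmetic difference is that the paper phrases the well-founded descent as choosing a minimal-height node in the intersection rather than as explicit backward structural induction, which is the same idea.
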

\begin{proof}
Suppose that the set $win_a(o)\cap win_b(\overline{o})$ is not empty. Consider a node $v$ of this set with the minimal height. 

If $v$ is a leaf node, then by Definition~\ref{win definition}, the statements $v\in win_a(o)$ and $v\in win_b(\overline{o})$ imply that $\ell(v)=o$ and $\ell(v)=\overline{o}$, respectively, which is a contradiction.

If $v$ is a decision node, then by Definition~\ref{win definition}, the statements $v\in win_a(o)$ and $v\in win_b(\overline{o})$ imply that there are actions $d_a\in \Delta^a_v$ and $d_b\in \Delta^b_v$ such that
\begin{equation}\label{7-jan-a}
Next^a_{d_a}(v)\subseteq win_a(o)  
\text{\;and\; }
Next^b_{d_b}(v)\subseteq win_b(\overline{o}).
\end{equation}
Consider an arbitrary action profile $\delta$ at node $v$ such that $\delta_a=d_a$ and $\delta_b=d_b$. Such a profile exists because agents $a$ and $b$ are distinct by the assumption of the lemma and set $\Delta^c_v$ is not empty for each agent $c\in\mathcal{A}$ by item~3 of Definition~\ref{mechanism}. 
Thus, $\tau_v(\delta)\in Next^a_{d_a}(v)$ and $\tau_v(\delta)\in Next^b_{d_b}(v)$ by Definition~\ref{next}.
Hence, by item~2 of Definition~\ref{win definition},
\begin{equation}\label{7-jan-b}
\tau_v(\delta)\in win_a(o)
\text{\;\; and\;\; }
\tau_v(\delta)\in win_b(\overline{o}).    
\end{equation}
Note that node $\tau_v(\delta)\in Ch_v$ by item~4 of Definition~\ref{mechanism}. Thus, the node $\tau_v(\delta)$ has a smaller height than node $v$. Therefore, statement~\eqref{7-jan-b} contradicts the choice of node $v$ as the smallest height node in the set $win_a(o)\cap win_b(\overline{o})$. 
\end{proof}

\begin{lemma}\label{base-lemma}
For any agent $a\in\mathcal{A}$,
any outcome $o\in  \{\text{Yes},\text{No}\}$, and any leaf node $v$, if $v\in ewin_a(o)$, then $\ell(v)=o$.  
\end{lemma}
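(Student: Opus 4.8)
The plan is to exploit the fact that, by Definition~\ref{ewin}, $ewin_a(o)$ is the \emph{smallest} subset of $V$ closed under its three conditions. The key structural observation is that items~2 and~3 of Definition~\ref{ewin} only ever place \emph{decision} nodes into the set, since each concludes with ``$v\in ewin_a(o)$'' for a decision node $v$; the only route by which a \emph{leaf} node can enter $ewin_a(o)$ is item~1, which inserts exactly the leaves labelled $o$. I would make this precise through a minimality argument rather than an induction on the stage at which a node is added.

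Concretely, I would define the candidate set
\[
W = ewin_a(o)\setminus\{v\in L\mid \ell(v)=\overline{o}\},
\]
obtained from $ewin_a(o)$ by deleting every leaf whose label is the \emph{wrong} outcome, and then verify that $W$ still satisfies all three closure conditions of Definition~\ref{ewin}. For item~1, we have $\ell^{-1}(o)\subseteq ewin_a(o)$, and $\ell^{-1}(o)$ contains no leaf labelled $\overline{o}$, so $\ell^{-1}(o)\subseteq W$. For items~2 and~3, suppose the relevant inclusion ($Next^a_d([v]_a)\subseteq W$ or $\bigcup_{d\in\Delta^a_v}Next^a_d(v)\subseteq W$) holds for a decision node $v$; since $W\subseteq ewin_a(o)$, the same inclusion holds with $W$ replaced by $ewin_a(o)$, so $v\in ewin_a(o)$ by the corresponding item of Definition~\ref{ewin}; and because $v$ is a \emph{decision} node it is not among the leaves removed, whence $v\in W$.

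Having checked that $W$ is closed under all three conditions, the minimality of $ewin_a(o)$ gives $ewin_a(o)\subseteq W$, while the reverse inclusion $W\subseteq ewin_a(o)$ is immediate from the definition of $W$. Hence $ewin_a(o)=W$, which means $ewin_a(o)$ contains no leaf node labelled $\overline{o}$. Consequently, if $v$ is a leaf node with $v\in ewin_a(o)$, then $\ell(v)\neq\overline{o}$, and since outcomes are binary, $\ell(v)=o$, as required.

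I expect the only delicate point to be the verification of the closure conditions for $W$, and within that the crucial (but routine) fact that items~2 and~3 conclude with ``$v\in ewin_a(o)$'' only for decision nodes $v$, so that deleting leaves cannot destroy closure. No analysis of $Next^a_d([v]_a)$ beyond the monotone inclusion $W\subseteq ewin_a(o)$ is needed, and the argument is insensitive to whether the deleted leaves were reachable by the closure rules at all.
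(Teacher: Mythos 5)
Your proof is correct and takes essentially the same route as the paper: the paper's one-sentence proof observes that $Next^a_d(v)$ is defined only for decision (non-leaf) nodes, so items~2 and~3 of Definition~\ref{ewin} can only place decision nodes into $ewin_a(o)$ and a leaf can enter only via item~1, which forces $\ell(v)=o$. Your explicit minimality argument (deleting the wrongly-labelled leaves and checking that the resulting set $W$ still satisfies all three closure conditions) is precisely the rigorous spelling-out of that observation, which the paper leaves implicit.
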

\begin{proof}
The statement of the lemma follows from  Definition~\ref{ewin} because, by Definition~\ref{next}, function $Next^a_d(v)$ is defined only for decision (non-leaf) nodes $v$.     
\end{proof}

The following lemma is used directly in the proofs of Lemma~\ref{ewin sub win} and Lemma~\ref{step-down-lemma}, both of which are then used in the proof of Lemma~\ref{step-up-lemma}.
\begin{lemma}\label{next-all-lemma}
For any agent $a\in\mathcal{A}$,
any outcome $o\in  \{\text{Yes},\text{No}\}$, and 
any decision node $v\in ewin_a(o)$, there exists an action $d\in\Delta^a_v$ such that $Next^a_d(v)\subseteq ewin_a(o)$.
\end{lemma}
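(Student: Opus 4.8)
The plan is to exploit the minimality of $ewin_a(o)$ in Definition~\ref{ewin}. I want to show that every decision node lying in $ewin_a(o)$ admits an action $d$ with $Next^a_d(v)\subseteq ewin_a(o)$. The natural temptation is to argue ``by how $v$ entered the set'', but since $ewin_a(o)$ is defined as a least fixpoint rather than through an explicit derivation, I will instead peel off the putative offending nodes and show that the remainder is still closed under the three rules, which forces the offending set to be empty.

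Concretely, let $B$ be the set of all decision nodes $v\in ewin_a(o)$ such that $Next^a_d(v)\nsubseteq ewin_a(o)$ for every $d\in\Delta^a_v$; these are exactly the nodes violating the lemma. I set $Y=ewin_a(o)\setminus B$ and verify that $Y$ satisfies the three closure conditions of Definition~\ref{ewin}. For condition~1, note that $\ell^{-1}(o)$ consists of leaf nodes while $B$ consists of decision nodes, so $\ell^{-1}(o)\subseteq ewin_a(o)$ gives $\ell^{-1}(o)\subseteq Y$. For condition~2, suppose $Next^a_d([v]_a)\subseteq Y$ for some decision node $v$ and action $d$; then $Next^a_d([v]_a)\subseteq ewin_a(o)$, so $v\in ewin_a(o)$ by condition~2 for $ewin_a(o)$, and since $v\in[v]_a$ we get $Next^a_d(v)\subseteq Next^a_d([v]_a)\subseteq ewin_a(o)$, which witnesses $v\notin B$; hence $v\in Y$. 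For condition~3, suppose $\bigcup_{d\in\Delta^a_v}Next^a_d(v)\subseteq Y\subseteq ewin_a(o)$; then $v\in ewin_a(o)$ by condition~3 for $ewin_a(o)$, and picking any $d\in\Delta^a_v$ (which exists because $\Delta^a_v$ is nonempty by Definition~\ref{mechanism}) gives $Next^a_d(v)\subseteq ewin_a(o)$, so again $v\notin B$ and $v\in Y$.

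Having shown that $Y$ is closed under all three rules, the minimality of $ewin_a(o)$ yields $ewin_a(o)\subseteq Y=ewin_a(o)\setminus B$, forcing $B=\varnothing$, which is exactly the claim that no decision node of $ewin_a(o)$ violates the conclusion. The only delicate point is the reliance on the least-fixpoint characterisation: one must resist the informal ``trace the derivation of $v$'' reasoning and instead perform the closure check. The two small facts that glue the rules to the desired conclusion are the reflexivity $v\in[v]_a$ (which turns the $[v]_a$ appearing in condition~2 into a statement about $v$ itself) and the nonemptiness of $\Delta^a_v$ (which lets condition~3 supply a concrete witnessing action).
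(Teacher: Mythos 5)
Your proof is correct, but it takes a different route from the paper's. The paper argues directly by \emph{inversion} on Definition~\ref{ewin}: since $v\in ewin_a(o)$ is a decision node, one of the two recursive clauses must apply to it; in the first case ($Next^a_d([v]_a)\subseteq ewin_a(o)$ for some $d\in\Delta^a_v$) it concludes $Next^a_d(v)\subseteq ewin_a(o)$, implicitly using $v\in[v]_a$, and in the second case ($\bigcup_{d\in\Delta^a_v}Next^a_d(v)\subseteq ewin_a(o)$) it picks any $d$ from $\Delta^a_v$, which is nonempty by item~3 of Definition~\ref{mechanism}. This is precisely the ``by how $v$ entered the set'' reasoning you flagged as suspect for a least-fixpoint definition and deliberately avoided: the inversion principle for such definitions is true but itself requires justification, and your argument supplies it --- you peel off the offending set $B$, verify that $Y=ewin_a(o)\setminus B$ is closed under all three rules, and invoke minimality to force $B=\varnothing$. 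Notably, the two ``glue'' facts are identical in both proofs (reflexivity $v\in[v]_a$ to pass from $Next^a_d([v]_a)$ to $Next^a_d(v)$, and nonemptiness of $\Delta^a_v$ to extract a witness in the third clause), so the mathematical substance coincides; what your version buys is foundational rigour --- it proves the rule-inversion step the paper takes for granted --- at the cost of some bookkeeping, while the paper's version buys brevity by treating inversion as immediate from the definition.
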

\begin{proof}
By Definition~\ref{ewin}, the assumption $v\in ewin_a(o)$ implies that one of the following cases takes place:

\noindent{\em Case 1}: $Next^a_d([v]_a)\subseteq ewin_a(o)$ for some action $d\in \Delta^a_v$. Then, $Next^a_d(v)\subseteq ewin_a(o)$.  

\noindent{\em Case 2}:
$\bigcup_{d\in\Delta^a_v}Next^a_d(v)\subseteq ewin_a(o)$. Note that set $\Delta^a_v$ contains at least one action $d$ by item~3 of Definition~\ref{mechanism}. Then,
$Next^a_d(v)\subseteq ewin_a(o)$.
\end{proof}

\noindent{\bf Lemma~\ref{ewin sub win}.} {\em
$ewin_a(o)\subseteq win_a(o)$, for each agent $a\in\mathcal{A}$ and
each outcome $o\in  \{\text{Yes},\text{No}\}$.   
}
\begin{proof}
It suffices to show that $v\in win_a(o)$ for each node $v\in ewin_a(o)$. We prove this statement by induction on the height of node $v$ in the mechanism tree.

If $v$ is a leaf node, then the assumption $v\in ewin_a(o)$ implies $v\in \ell^{-1}(o)$ by Definition~\ref{ewin}. Therefore, $v\in win_a(o)$ by item~1 of Definition~\ref{win definition}. 

If $v$ is a decision node, then, by Lemma~\ref{next-all-lemma}, there exists an action $d\in\Delta^a_v$ such that $Next^a_d(v)\subseteq ewin_a(o)$. By Definition~\ref{next}, all nodes in the set $Next^a_d(v)$ have smaller height than node $v$. Hence, by the induction hypothesis, $Next^a_d(v)\subseteq win_a(o)$. Therefore, $v\in win_a(o)$ by item~2 of Definition~\ref{win definition}.
\end{proof}

\noindent{\bf Lemma~\ref{M1 lemma}.} {\em
Mechanism with imperfect information $M$ is epistemic-gap-free.    
}
\begin{proof}
By Definition~\ref{epistemic gap-free}, it suffices to show that for each leaf node of mechanism $M$ there is at least one agent epistemically responsible at that node.

First, note that $u_3\in ewin_C(\text{\em Yes})$ by Definition~\ref{ewin}. Thus, agent $C$ is epistemically responsible at leaf node $v_2$.

Second, observe that $u_5,v_2\in ewin_B(\text{No})$. Thus, $Next^B_{0}([u_2]_B)=Next^B_{0}(\{u_2,u_4\})=\{u_5,v_2\}\subseteq ewin_B(\text{\em No})$. Hence, $u_2\in ewin_B(\text{\em No})$ by item~2 of Definition~\ref{ewin}. Then, agent $B$ is epistemically responsible at leaf node $v_1$.

Epistemic responsibility at the other six leaf nodes of mechanism $M$ is straightforward.
\end{proof}

\noindent{\bf Lemma~\ref{M2 lemma}.} {\em
Mechanism $M$ is not an epistemic elected dictatorship. 
}
\begin{proof}
By Definition~\ref{elected epistemic dictatorship}, it suffices to show that there is no epistemic dictator at any of the three nodes of the decision path from root node $u_1$ to leaf node $v_1$. We consider these three nodes separately.

\vspace{1mm}
\noindent{\bf Node $u_1$}. Agent $A$ has no strategy at node $u_1$ to unilaterally guarantee either outcome {\em Yes} or {\em No}. Indeed, no matter if agent $A$ chooses node $u_2$ or $u_3$, depending on the actions of agents $B$ and $C$ that follow, the outcome can be either {\em Yes} or {\em No}. Thus, in particular, $u_1\notin win_A(\textit{Yes})$. Hence, $u_1\notin ewin_A(\textit{Yes})$ by Lemma~\ref{ewin sub win}. Therefore, agent $A$ is not an epistemic dictator at node $u_1$ by Definition~\ref{epistemic dictator}.

Agent $B$ also has no strategy at node $u_1$ to unilaterally guarantee either outcome {\em Yes} or {\em No}. This is because if agent $A$ chooses node $u_3$, then it is up to agent $C$ to decide on the outcome. Thus, in particular, $u_1\notin win_B(\textit{Yes})$. Hence, $u_1\notin ewin_B(\textit{Yes})$ by Lemma~\ref{ewin sub win}. Therefore, agent $B$ is not an epistemic dictator at node $u_1$ by Definition~\ref{epistemic dictator}. 

Similarly, if at node $u_1$ agent $A$ chooses $u_2$, then agent $C$ has no strategy to guarantee either of the two outcomes. Therefore, agent $C$ is not an epistemic dictator at node $u_1$. 

\vspace{1mm}
\noindent{\bf Node $u_2$}. It is easy to see that neither agent $A$ nor agent $C$ has a unilateral strategy at node $u_2$ to guarantee either of the two outcomes. Thus, using an argument similar to the one above, we can conclude that neither of these two agents is an epistemic dictator at node $u_2$. 

Towards a contradiction, suppose that agent $B$ is an epistemic dictator at node $u_2$. Thus, $u_2\in ewin_B(\textit{Yes})$ by Definition~\ref{epistemic dictator}. Hence, by Definition~\ref{ewin}, one of the following cases must take place:

\noindent{\em Case 1}: $Next^a_d([u_2]_B)\subseteq ewin_B(\textit{Yes})$ for some action $d\in \{0,1,2\}$ of agent $B$ at node $u_2$.
Thus, $Next^a_d(u_4)\subseteq ewin_B(\textit{Yes})$ because $u_2\sim_B u_4$, see Figure~\ref{dic-nogap figure}.
Hence, $v_2\in ewin_B(\textit{Yes})$, see again Figure~\ref{dic-nogap figure}. Therefore, $\ell(v_2)=\textit{Yes}$ by Definition~\ref{ewin}, which is a contradiction, see Figure~\ref{dic-nogap figure}.

\noindent{\em Case 2}: $\bigcup_{d\in\Delta^B_{u_2}}Next^B_d(u_2)\subseteq ewin_B(\textit{Yes})$. Then, $Next^B_2(u_2)\subseteq ewin_B(\textit{Yes})$. Thus, $u_6\in ewin_B(\textit{Yes})$. Hence, $u_6\in win_B(\textit{Yes})$ by Lemma~\ref{ewin sub win}, which is a contradiction because agent $B$ has no strategy to guarantee outcome $\textit{Yes}$ at node $u_6$.

\vspace{1mm}
\noindent{\bf Node $v_1$}. Note that $\ell(v_1)=\textit{Yes}$. Thus, $v_1\notin ewin_X(\textit{No})$ for each agent $X\in \{A,B,C\}$. Hence, by Definition~\ref{epistemic dictator}, none of the three agents is an epistemic dictator at node $v_1$.
\end{proof}

\begin{lemma}\label{step-down-lemma}
For any decision node $v\in D$, if $v\in ewin_a(o)$ and $v\notin win_a(\overline{o})$, then there is a node $u\in Ch_v$ such that $u\in ewin_a(o)$ and $u\notin win_a(\overline{o})$.   
\end{lemma}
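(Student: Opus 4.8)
The plan is to isolate a single action $d \in \Delta^a_v$ that simultaneously witnesses membership in $ewin_a(o)$ through all of its children while failing to force every child into $win_a(\overline{o})$, and then to extract the desired child as a node where that latter failure occurs.

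First I would invoke Lemma~\ref{next-all-lemma} on the hypothesis $v \in ewin_a(o)$ to obtain an action $d \in \Delta^a_v$ with $Next^a_d(v) \subseteq ewin_a(o)$. This provides a ``good'' action under which every reachable child already lies in $ewin_a(o)$, so the first half of the conclusion, namely $u \in ewin_a(o)$, will hold automatically for whichever child I eventually select from $Next^a_d(v)$.

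Next I would apply Lemma~\ref{next-exists-lemma} with the outcome instantiated as $\overline{o}$. Since the hypothesis supplies $v \notin win_a(\overline{o})$, that lemma yields $Next^a_d(v) \nsubseteq win_a(\overline{o})$ for the very same action $d$ (indeed for every action in $\Delta^a_v$). Hence there exists a node $u \in Next^a_d(v)$ with $u \notin win_a(\overline{o})$, securing the second half of the conclusion. Finally I would confirm that $u$ is a genuine child of $v$: by Definition~\ref{next} together with item~4 of Definition~\ref{mechanism}, every element of $Next^a_d(v)$ is a value of $\tau_v$ and therefore lies in $Ch_v$, so $u \in Ch_v$. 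Combining the three facts gives $u \in Ch_v$, $u \in ewin_a(o)$, and $u \notin win_a(\overline{o})$, as required.

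I do not anticipate a serious obstacle. The only point demanding care is to apply \emph{both} lemmas to the \emph{same} action $d$ rather than to arbitrary actions: it is precisely the coincidence of ``$d$ sends every child into $ewin_a(o)$'' with ``$d$ does not send every child into $win_a(\overline{o})$'' that guarantees a single child $u$ satisfying both conditions at once. Everything else is bookkeeping against Definition~\ref{next} and Definition~\ref{mechanism}.
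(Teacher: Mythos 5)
Your proposal is correct and follows essentially the same route as the paper's own proof: both obtain the action $d$ from Lemma~\ref{next-all-lemma}, apply Lemma~\ref{next-exists-lemma} with outcome $\overline{o}$ to that same $d$, extract $u\in Next^a_d(v)$ with $u\notin win_a(\overline{o})$, and conclude $u\in Ch_v$ via Definition~\ref{next} and item~4 of Definition~\ref{mechanism}. Your closing remark about using the \emph{same} action $d$ for both lemmas is exactly the coordination the paper's proof relies on.
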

\begin{proof}
By Lemma~\ref{next-all-lemma}, the assumption  $v\in ewin_a(o)$ implies that there is an action $d\in\Delta^a_v$ such that 
\begin{equation}\label{16-oct-h}
Next^a_d(v)\subseteq ewin_a(o).    
\end{equation}

Also, by Lemma~\ref{next-exists-lemma}, the assumption $v\notin win_a(\overline{o})$ implies $Next^a_d(v)\nsubseteq win_a(\overline{o})$. Then,  there is a node 
\begin{equation}\label{16-oct-i}
u\in Next^a_d(v)    
\end{equation}
such that $u\notin win_a(\overline{o})$. Note that  $u\in ewin_a(o)$ by statement~\eqref{16-oct-i} and statement~\eqref{16-oct-h}. Finally, statement~\eqref{16-oct-i} implies $u\in Ch_v$ by Definition~\ref{next} and item~4 of Definition~\ref{mechanism}. 
\end{proof}

\noindent{\bf Lemma~\ref{step-up-lemma}.} {\em
For any root-to-node decision path $v_1,\dots,v_k$ of an epistemic-gap-free mechanism, if $v_k\in ewin_a(o)$ and $v_k\notin win_a(\overline{o})$, then there exists $i<k$ and $b\in\mathcal{A}$ such that $v_i\in ewin_b(\overline{o})$. 
}
\begin{proof}
By applying Lemma~\ref{step-down-lemma} multiple number of times, root-to-node decision path $v_1,\dots,v_k$ can be extended to a root-to-{\bf\em leaf} decision path $v_1,\dots,v_k,v_{k+1},\dots,v_n$ such that
\begin{align}
v_j\in ewin_a(o)    &\text{ for each $j\ge k$},\label{29-oct-a}\\
v_j\notin win_a(\overline{o}) &\text{ for each $j\ge k$}.\label{29-oct-b}
\end{align}
By Lemma~\ref{ewin sub win}, statement~\eqref{29-oct-a} implies that $v_j\in win_a(o)$ for each $j\ge k$. Then, $v_j\notin win_{c}(\overline{o})$ for each $j\ge k$ and each agent $c\neq a$ by Lemma~\ref{two hares}. Hence, 
$v_j\notin win_c(\overline{o})$ for each $j\ge k$ and each agent $c\in\mathcal{A}$ 
by statement~\eqref{29-oct-b}. Thus, by Lemma~\ref{ewin sub win},
\begin{equation}\label{29-oct-c} 
v_j\notin ewin_c(\overline{o}) \text{ for each $j\ge k$ and each agent $c\in\mathcal{A}$}.   
\end{equation}
In addition, statement~\eqref{29-oct-a} implies that $v_n\in ewin_a(o)$. Thus, by Lemma~\ref{base-lemma} and because $v_n$ is a leaf node,
\begin{equation}\label{29-oct-d} 
\ell(v_n)=o.
\end{equation}

By Definition~\ref{epistemic gap-free}, the assumption of the lemma that the mechanism is epistemic gap-free implies that there is an agent $b$ epistemically responsible at leaf node $v_n$. Thus, by Definition~\ref{responsible}, there is $i<n$ such that 
$v_i\in ewin_b(\overline{\ell(v_n)})$. Hence, $v_i\in ewin_b(\overline{o})$ by equation~\eqref{29-oct-d}. Therefore, $i<k$ by statement~\eqref{29-oct-c}.
\end{proof}

\noindent{\bf Lemma~\ref{N1 lemma}.} {\em
Mechanism with imperfect information $N$ is an elected semi-epistemic dictatorship.
}
\begin{proof}
Note that $v_1\in win_B(\textit{Yes})$ and $v_2,v_3\in ewin_B(\textit{No})$ by item~1 of Definition~\ref{win definition} and item~1 of Definition~\ref{ewin}, respectively, see Figure~\ref{straw give up figure}. Thus,
$Next^B_0(u_2)=v_1\in win_B(\textit{Yes})$ and 
$Next^B_2([u_2]_B)=Next^B_2(\{u_2,u_3\})=\{v_2,v_3\}\subseteq ewin_B(\textit{No})$ again see Figure~\ref{straw give up figure}. Hence, $u_2\in win_B(\textit{Yes})$ and  $u_2\in ewin_B(\textit{No})$
by item~2 of Definition~\ref{win definition} and item~2 of Definition~\ref{ewin}, respectively. Therefore, agent $B$ is a semi-epistemic dictator at node $u_2$.
It can be similarly shown that $B$ is also a semi-epistemic dictator at node $u_3$. Therefore, $N$ is an elected semi-epistemic dictatorship by Definition~\ref{semi-epistemic dictator}.
\end{proof}

\noindent{\bf Lemma~\ref{N2 lemma}.} {\em
Mechanism $N$ with imperfect information is not epistemic-gap-free.
}
\begin{proof}
By Definition~\ref{ewin}, 
$
ewin_A(\text{\em Yes})=ewin_B(\text{\em Yes})=\{v_1,v_4\}   
$.
Thus, $u_1,u_2,v_2\notin ewin_A(\textit{Yes})$ and $u_1,u_2,v_2\notin ewin_B(\textit{Yes})$. Hence, neither of the two agents is epistemically responsible at node $v_2$ by Definition~\ref{epistemically responsible}. Therefore, $N$ is not epistemic-gap-free by Definition~\ref{epistemic gap-free}.
\end{proof}

\section{Proof of Theorem~\ref{big result perfect info}}\label{back to perfect section}

In this section, we prove Theorem~\ref{big result perfect info}. Recall that, after Definition~\ref{imperfect info mechanism} we discussed how any mechanism with imperfect information can be treated as a mechanism (as in Definition~\ref{mechanism}) by ignoring the indistinguishability relation. Let us now observe that the opposite is true as well:
any decision-making mechanism satisfying Definition~\ref{mechanism} can be viewed as a special case of decision-making mechanisms with imperfect information by assuming trivial equivalence relations on nodes (each decision node is indistinguishable only from itself by each agent). This way, we are able to use results about decision-making mechanisms with imperfect information to prove Theorem~\ref{big result perfect info}.

\vspace{1mm}
\noindent{\bf Theorem~\ref{big result perfect info}.} {\em
A a mechanism is gap-free mechanism iff it is an elected dictatorship. 
}
\begin{proof}
Consider any mechanism $M$ specified as in Definition~\ref{mechanism}. When this mechanism is viewed as a mechanism with imperfect information, we have
\begin{equation}\label{13-jan-a}
\forall a\in\mathcal{A}\, \forall o\in\{\textit{Yes},\textit{No}\}\, (ewin_a(o)=win_a(o))   
\end{equation}
by Definition~\ref{ewin} and Definition~\ref{win definition}. The statement of the theorem follows from the following two claim.
\begin{claim}
$M$ is gap-free iff it is epistemic-gap-free. 
\end{claim}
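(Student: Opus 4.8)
The plan is to reduce the claim entirely to equation~\eqref{13-jan-a}, which has already been established for $M$ viewed as a mechanism with trivial indistinguishability. The essential point is that the two gap-freeness notions differ only in whether they are built on $win$ or on $ewin$, and~\eqref{13-jan-a} tells us these two families of sets coincide.

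First I would unwind the definitions one layer. By Definition~\ref{responsible}, an agent $a$ is responsible at a leaf node $v$ exactly when some decision node $u$ on the root-to-$v$ path satisfies $u\in win_a(\overline{\ell(v)})$. By Definition~\ref{epistemically responsible}, agent $a$ is epistemically responsible at $v$ under the identical condition with $ewin_a(\overline{\ell(v)})$ in place of $win_a(\overline{\ell(v)})$. Applying~\eqref{13-jan-a} with the outcome $o=\overline{\ell(v)}$ gives $win_a(\overline{\ell(v)})=ewin_a(\overline{\ell(v)})$, so the membership test along the path is the same in both cases. Hence for every leaf $v$ and every agent $a$, agent $a$ is responsible at $v$ if and only if agent $a$ is epistemically responsible at $v$.

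Second, I would lift this pointwise equivalence through the two gap-freeness definitions. Definition~\ref{gap-free} asks that at each leaf node at least one agent be responsible, while Definition~\ref{epistemic gap-free} asks that at each leaf node at least one agent be epistemically responsible. Since the two responsibility predicates coincide for every (agent, leaf) pair, the inner existential ``there is at least one agent'' holds at a given leaf under one reading iff it holds under the other; quantifying over all leaves then yields that $M$ is gap-free iff it is epistemic-gap-free.

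I do not anticipate a genuine obstacle: once~\eqref{13-jan-a} is in hand the claim is a definition-chase, and the only thing to watch is keeping the quantifier structure straight, establishing the equivalence at the level of individual (agent, leaf) pairs before collapsing the inner existential. The real content lives in~\eqref{13-jan-a} itself, whose justification is that trivial indistinguishability makes $[v]_a=\{v\}$, collapsing item~2 of Definition~\ref{ewin} onto item~2 of Definition~\ref{win definition} and rendering item~3 redundant, so that $win_a(o)$ and $ewin_a(o)$ are the smallest sets closed under identical rules.
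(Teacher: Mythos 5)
Your proposal is correct and follows essentially the same route as the paper's proof: both reduce the claim to statement~\eqref{13-jan-a}, observe via Definitions~\ref{responsible} and~\ref{epistemically responsible} that the two responsibility predicates coincide pointwise, and then pass through Definitions~\ref{gap-free} and~\ref{epistemic gap-free}. Your closing justification of~\eqref{13-jan-a} itself (trivial indistinguishability gives $[v]_a=\{v\}$, collapsing item~2 of Definition~\ref{ewin} onto item~2 of Definition~\ref{win definition} and making item~3 redundant since $\Delta^a_v\neq\varnothing$) is also sound and is in fact slightly more explicit than the paper, which asserts~\eqref{13-jan-a} without elaboration.
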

\begin{proof-of-claim}
By Definition~\ref{responsible} and Definition~\ref{epistemically responsible}, statement~\eqref{13-jan-a} implies that an agent is responsible at some leaf node iff the agent is {\em epistemically} responsible at the node. Therefore, by Definition~\ref{gap-free} and Definition~\ref{epistemic gap-free}, mechanism $M$ is gap-free iff it is epistemic-gap-free.
\end{proof-of-claim}

\begin{claim}
Mechanism $M$ is epistemic-gap-free iff it is an elected dictatorship.
\end{claim}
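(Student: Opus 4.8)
The plan is to prove both directions by reducing to the two imperfect-information theorems already established, exploiting the fact that under trivial indistinguishability relations the three flavours of dictatorship collapse into one. First I would record the consequence of equation~\eqref{13-jan-a}, namely $ewin_a(o)=win_a(o)$ for every agent $a$ and outcome $o$: this makes the notion of an epistemic dictator (Definition~\ref{epistemic dictator}) coincide with that of a dictator (Definition~\ref{dictator}), and, more importantly, it makes the notion of a \emph{semi}-epistemic dictator (Definition~\ref{semi-epistemic dictator}) coincide with that of a dictator as well. Indeed, if $v\in ewin_a(o)$ and $v\in win_a(\overline{o})$ for some outcome $o$, then $v\in win_a(o)$ by equation~\eqref{13-jan-a}, so $a$ witnesses both $v\in win_a(o)$ and $v\in win_a(\overline{o})$ and is thus a dictator at $v$; the converse is immediate. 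Hence, for this particular $M$, the classes of elected epistemic dictatorships, elected semi-epistemic dictatorships, and elected dictatorships all coincide.

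For the direction ``elected dictatorship $\Rightarrow$ epistemic-gap-free'', I would use the collapse above to identify an elected dictatorship of $M$ with an elected epistemic dictatorship, and then apply Theorem~\ref{big result} directly to conclude that $M$ is epistemic-gap-free.

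For the converse ``epistemic-gap-free $\Rightarrow$ elected dictatorship'', I would invoke Theorem~\ref{big result two} to obtain that $M$ is an elected semi-epistemic dictatorship, and then appeal to the collapse observed in the first step to upgrade this to an elected dictatorship. Combining the two directions establishes the claim, and with the preceding claim this completes the proof of Theorem~\ref{big result perfect info}.

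The main obstacle is the converse direction: Theorem~\ref{big result two} only delivers the \emph{a priori} weaker conclusion of an elected semi-epistemic dictatorship, so the crux is to recognise that under trivial indistinguishability relations equation~\eqref{13-jan-a} forces every semi-epistemic dictator to be a full dictator. This is precisely what makes the two bounds of Figure~\ref{venn diagram figure} coincide in the perfect-information case, allowing the ``squeezed'' characterisation to sharpen into the equivalence of Theorem~\ref{big result perfect info}.
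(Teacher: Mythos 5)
Your proposal is correct and follows essentially the same route as the paper's own proof: both use statement~\eqref{13-jan-a} to collapse the three notions of dictator (and hence the three notions of elected dictatorship) into one, and then squeeze the epistemic-gap-free class between Theorem~\ref{big result} and Theorem~\ref{big result two}. Your explicit verification that a semi-epistemic dictator becomes a full dictator under trivial indistinguishability is a welcome detail the paper leaves implicit, but it is the same argument.
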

\begin{proof-of-claim}
By Definition~\ref{dictator}, Definition~\ref{epistemic dictator} and Definition~\ref{semi-epistemic dictator}, statement~\eqref{13-jan-a} implies that for any agent $a\in\mathcal{A}$ and any decision node $v\in D$, the following three statements are equivalent:
\begin{enumerate}
\item agent $a$ is a dictator at $v$, 
\item agent $a$ is a semi-epistemic dictator at $v$,
\item agent $a$ is an epistemic dictator at $v$.
\end{enumerate}
Thus, by Definition~\ref{elected dictatorship}, Definition~\ref{semi-epistemic dictatorship}, and Definition~\ref{elected epistemic dictatorship}, the following three statements are equivalent:
\begin{enumerate}
    \item $M$ is an elected dictatorship,
    \item $M$ is an elected semi-epistemic dictatorship,
    \item $M$ mechanism is an elected epistemic dictatorship.
\end{enumerate}
Then, $M$ is epistemic-gap-free iff $M$ is an elected dictatorship by Theorem~\ref{big result} and Theorem~\ref{big result two} (also see Figure~\ref{venn diagram figure}).
\end{proof-of-claim}
This concludes the proof of the theorem.
\end{proof}

\end{document}